\documentclass[letterpaper,10pt,conference]{IEEEtran}
\usepackage{setspace}
\usepackage{amsmath,amsthm, amssymb}
\usepackage{times}
\usepackage{cite}
\usepackage{graphicx}
\usepackage[mathscr]{euscript}
\usepackage[ruled,lined]{algorithm2e}
\usepackage{algorithmicx}
\usepackage{xcolor}
\usepackage[compatible]{algpseudocode} 
\usepackage{geometry}{}
\usepackage{subcaption}
\usepackage{url}
\geometry{papersize={7.875 in , 10.75 in}, margin=0.5in}
\doublespacing
\DeclareMathOperator*{\minimize}{minimize}

\newtheorem{definition}{Definition}
\newtheorem{exmp}{Example}

\newtheorem{theorem}{Theorem}
\newtheorem{lemma}{Lemma}

\IEEEoverridecommandlockouts

\begin{document} 
\title{Optimal Storage Aware Caching Policies for Content-Centric Clouds}
\author{\IEEEauthorblockN{Samta Shukla and Alhussein A. Abouzeid}\\
\IEEEauthorblockA{Department of Electronic, Computers and Systems Engineering\\
Rensselaer Polytechnic Institute, USA\\
\{shukls, abouzeid\}@rpi.edu}}
\singlespacing
\setstretch{1.2}

\maketitle

\begin{abstract}
\par Caches in Content-Centric Networks (CCN) are increasingly adopting flash memory based storage. The current flash cache technology stores all files with the largest possible ``expiry date," i.e. the files are written in the memory so that they are retained for as long as possible. This, however, does not leverage the CCN data characteristics where content is typically short-lived and has a distinct popularity profile. Writing files in a cache using the longest retention time damages the memory device thus reducing its lifetime. However, writing using a small retention time can increase the content retrieval delay, since, at the time a file is requested, the file may already have been expired from the memory. This motivates us to consider a joint optimization wherein we obtain optimal policies for jointly minimizing the content retrieval delay (which is a network-centric objective) and the flash damage (which is a device-centric objective). Caching decisions now not only involve \emph{what} to cache but also for \emph{how long} to cache each file. We design provably optimal policies and numerically compare them against prior policies.
\end{abstract}

\begin{IEEEkeywords}
Content-centric network, caching, computing, flash memory, Least Recently Used, First-In-First-out, Random (RND), Farthest-in-Future, Markov Decision Process.
\end{IEEEkeywords}

\section{Introduction}
\label{sec:intro}

\par Flash memory based cache is a principal component of the emerging Content-Centric or Information-Centric Networks (CCN/ ICN) with ubiquitous caching, mobile/cloud computing and device-to-device networking \cite{Guo:2009,Ko:2012,Armbrust:2010,6232368}. One of the main obstacles in flash memory adoption is its high rate of wear out (referred as flash damage) which is directly proportional to the programmed data retention time \cite{6232368,reconfs, wear, lifetime, microsoft_presentation}.

\par  The relationship between data retention and wear out can be briefly described as follows. A flash memory consists of flash cells. Data is stored in a flash memory by \emph{programming} (P) the threshold voltage of each memory cell into two or more non-overlapping voltage windows. A memory cell is \emph{erased} (E) of all the data before it is programmed; erasing data involves removing the charges in the floating gate and setting the threshold voltage to the lowest voltage window. The reliability of a flash (or flash lifetime) is specified in terms of the number of program/erase (P/E) cycles it can endure (e.g., $10^4$ to $10^5$ P/E cycles) \cite{desnoyers}. 
Depending on the underlying technology, all flash cells are programmed to retain data in cache for a specified duration (from 1 to 10 years), known as the \emph{retention time}.
The specified memory retention is achieved by programming data with a high threshold voltage. However, programming at high voltages causes a high wear out to the flash cell thus reducing the memory lifetime \cite{reconfs, wear, lifetime}.
 
\par The current practice in flash technology is not optimal. It writes all files with a fixed maximum/default threshold voltage to get the maximum possible data retention which in turn causes maximum cache damage at each write. Note that the high damage caused is permanent even if the file is evicted from the cache before its retention expires. Clearly, writing every content with maximum retention is wasteful since in CCN some content could be less popular. We aim to obtain optimal retention times by leveraging the content popularity profile which can be locally estimated from the user requests in CCN architectures \cite{siris2014efficient}.

\par We take first steps in reformulating the traditional data caching problem by proposing a \emph{cross-layer optimization} that combines the \emph{cache-level} objective of minimizing the content-retrieval delay and the \emph{device-level} objective of minimizing the device damage\footnote{A preliminary version of this work appeared in \cite{Samta_WiOpt_2016}. This paper extends the work in \cite{Samta_WiOpt_2016} with: (1) A complete description of the online policy in Section \ref{sec:complete_policy}. (2) Section \ref{sec:policyperformance}, wherein we compare the online and offline policies by showing the delay-damage tradeoffs and the competitive ratios. (3) The full proofs for all theorems and lemmas in the Appendices.}. We note that the cache-level and device-level objectives are conflicting: a smaller delay is achieved by writing files with longer retention but that incurs a high damage; a smaller (or zero) damage is achieved by not writing files at all but that causes large delays. Despite the inherent trade-off, earlier work in these areas have progressed largely independently. For example, in the device literature, a recent line of work considers optimizing damage/retention times by dynamically trimming the retention duration of a content based on their refresh cycle durations \cite{liu2012optimizing, retentiontrimming}. Another closely related work considers trading retention time for system performance (such as memory speed and lifetime) \cite{microsoft_presentation}.
By contrast, the caching literature consists of innumerable attempts to construct policies with high cache hit probability for achieving lower network delays (see \cite{fofack2014models} and the citations within) while overlooking the device-related aspects.

The key challenge addressed in this paper is to find caching policies for a finite capacity cache, that, in addition to the functions provided by a traditional caching policy, determine optimal file retention times 
to incur minimum flash damage when subject to a constraint on acceptable network delay. A file written in the cache at time $t$ for a retention duration $D$ is no longer readable from the cache after time $t+D$, thus leading to a cache miss (unless the file is re-written between $t$ and $t+D$ to extend its original retention but that incurs additional damage). 

\par Our first contribution (Section \ref{sec:optoffpol}) is to solve the problem of \emph{offline caching}, i.e. caching when the content request string is given. We design an optimal offline policy, Damage-Aware REtention (DARE), that returns the optimal retention times for every file without exceeding the optimal cache misses given by Belady's Farthest in Future (FiF) algorithm\footnote{With traditional caching (caching without optimizing memory retentions) Belady's Farthest in Future (FiF) algorithm obtains the optimal cache misses. As FiF is damage/retention agnostic, we account for the flash damage in FiF by assigning the retention time of a file as the duration for which it stays in the cache before it is evicted due to a cache miss.}. 

We prove analytically and show by simulations that our policy, DARE, by taking retentions into account, achieves a significantly lower cache damage than FiF without increasing the optimal delay (or cache misses). 

\par Our second contribution is to solve the \emph{online caching} problem, i.e. caching when the request string is not given ahead of time. Our optimal online policy, DARE-$\Delta$, approaches the online caching problem in two stages. It first assumes a large cache (a cache with no capacity constraint) and obtains the optimal file retentions by solving an optimization problem (Section \ref{sec:optonpol}). A large cache assumption implies that there are no evictions and the cache misses are only because of the files expiring. The policy then extends the results from a large cache case to a cache of finite capacity in Section \ref{sec:fincache}. In this case a cache miss can result in a file eviction if the cache is full (Section \ref{sec:fincache}). Subsequently, DARE-$\Delta$ exploits the optimal retentions obtained for large caches and models the problem of which file to evict at every cache miss as a  Markov Decision Process (MDP). In contrast with the usual MDP-based approaches which suffer from the curse of dimensionality, we show that our MDP can be characterized to give a very simple, easy to implement rule for evicting files.
Our simulations (Sections \ref {sec:optonpol}, \ref{sec:fincache}, \ref{sec:policyperformance}) reinforce the theoretical findings for a range of parameters, caching policies and damage functions for the online case. We note that our work is a significant generalization of \cite{omri} where authors found an eviction sequence using MDP but do not consider flash damage constraints in their formulation.

\section{Preliminaries}
\label{sec:prelim}
In this section, we explain the model assumptions that are common to all the analytical results in the paper.

We also discuss our work in the light of closely related literature.
\subsection{Model Assumptions}
\subsubsection{Cache-level assumptions}
Our model for online caching is based on the following model assumptions. The file arrivals conform to the Independent Reference Model (IRM)\footnote{Although IRM does not take temporal locality into account, it is a widely accepted, standard traffic model in caching literature \cite{omri,martina}.} \cite{omri,martina}, where each file is requested with a static probability independent of other requests. We describe our traffic model in more detail in Section \ref{subsec: on_model}. For tractability, we obtain results for Poisson file arrivals modulated with a suitable popularity distribution -- such as, ZipF popularity law \cite{omri,martina} -- and exponentially distributed retention times in our analysis in Sections \ref{sec:optonpol} and \ref{sec:fincache}\footnote{Our Markovian formulations in Section \ref{sec:fincache} require memoryless arrivals and retention times.}. For ease of exposition, we assume that files are fetched from the server (upon a cache miss) by incurring deterministic delays. This implies that the delay minimizing objective translates to minimizing the number of cache misses. Thus we will use minimizing delay and minimizing cache misses interchangeably in the rest of the paper.

Let $M$ denote the set of all files where each file $m\in M$ is of unit size\footnote{Our results can be generalized to account for file sizes, we adopt unit file sizes for ease of exposition.}. Files are requested at a cache with finite capacity of size $B$ files. A requested file that is not in the cache results in a cache miss. Upon a cache miss, the requested file is fetched by incurring a delay cost (see Section \ref{sec:cost_jhamela}) and is subsequently written in the cache by incurring a retention cost (see Section \ref{sec:cost_jhamela}). Files are served instantaneously in the case of a cache hit. 

\subsubsection{Device-level assumptions}
\par The process of writing files in the flash cache is explained as follows. A memory is divided into various sectors from which a sector is chosen uniformly at random. It is a reasonable assumption since the disk controller in a flash exercises ``wear leveling" by spreading writes evenly across the flash chip for causing less damage to the flash lifetime \cite{Koltsidas:2011}. We neglect the damage caused due to subsequent reads of an already written file and only consider the damage due to writing a file since reading the disk does not require writing or erasing \cite{Koltsidas:2011}. Subsequently, we model the P/E cycle counts and erasure costs (associated with programming and erasing a file) in a flash memory with the help of a damage function which takes retention times as arguments (see Section \ref{sec: sys_model_off}). This is justified because the P/E cycle duration is closely related to the retention time. A higher retention is obtained by programming (P) the flash with a very high positive voltage thus requiring a very high negative voltage to erase (E) the data. Finally, while there are only empirical relationships known about flash damage as a function of the depleting cell life \cite{lifetime}, we propose and analyze a general mathematical model that captures a wider range of dependence between flash damage and depleting cell life due to file retention (see Section \ref{sec: sys_model_off}).

\subsection{Cost of fetching and writing a file}
\label{sec:cost_jhamela}
\par The total cost of fetching and writing file $m$ upon a cache miss consists of a delay cost $\delta(m)$ and retention (writing) cost $f(m)$. The total cost per miss of file $m$ is denoted by $c(m)=\delta(m)+f(m)$.

\textit{Delay cost $\delta(\cdot)$}:
For every cache miss, fetching the requested file from the server results in a determinimistic delay cost which can be thought of as the transmission delay to obtain the file from the server based on the time of the day, current server workload, or available channel bandwidth, etc.

\textit{Retention cost $f(\cdot)$}: Retention cost is incurred due to flash memory damage. While there are only empirical relationships known about flash memory damage as a function of memory retention times, we outline two desirable properties for constructing a suitable damage function: (1) Memory damage, although a function of several factors, is known to increase with retention time; this is because writing a file at a higher threshold voltage helps in a longer file retention thereby incurring a higher damage \cite{reconfs, wear, lifetime}; (2) Damage function, $f(\cdot)$, is a complicated, non-linear function with $f(0)=0$. Based on these properties, we choose a \emph{convex increasing polynomial} as a damage function satisfying both (1)-(2). 

\par The total cost descriptions for offline and online policies are discussed in Sections \ref{sec: sys_model_off} and \ref{subsec: on_model}, respectively.

\subsection{DARE caches vs. TTL caches}
\par Having a file written for a duration equal to its retention time, as in DARE caches, may appear similar to the TTL caches considered in \cite{fofack2014models, hari, berger, towsley}, where files stay in cache for their TTL (time-to-live) duration. 
 
However, our work, even at the conceptual level, is different from TTL caches\footnote{Coincidentally, the hit and miss probabilities obtained for DARE with the large cache assumption are the same as the hit and miss probabilities of a TTL cache under the RND caching policy (see Section \ref{subsec:caching_policy}).}: (1) DARE considers both finite and infinite capacity caches whereas TTL considered infinite capacity caches only. Analyzing a finite capacity cache is paticularly applicable for CCN routers which are known to have small caches \cite{fofack2014models}. (2) The goal of DARE caching is to minimize flash damage with acceptable delay guarantees. DARE takes retention time distributions as input and outputs the optimal retention values satisfying the goal. In contrast, TTL caching is a modeling technique devised to simplify the analysis of traditional caching policies. They take a damage oblivious existing policy as input to obtain (an asymptotic approximation of) the corresponding TTL distribution as an output (see \cite{fofack2014models} for a detailed analysis of TTL caches).
\subsection{Summary of prior caching policies}
\label{subsec:caching_policy}
We compare our optimal policies against the performance of the following well-known policies (e.g. see \cite{fofack2014models}). In these policies, a requested file not already in the cache is inserted. The policies differ in their eviction policies when a cache is full. In Least Recently Used (LRU) policy,  the least recently used file is evicted. In First In First Out (FIFO) policy, the file which was written first is evicted. In  RaNDom (RND) policy a file is evicted from the cache uniformly at random. Farthest in Future (FiF) policy, also called Belady's Algorithm, evicts the file whose next request is the farthest in time. FiF minimizes the number of cache misses \cite{belady} but assumes knowledge of the full time sequence of requests. LRU is widely used since 
it performs well even for arbitrary request strings. RND and FIFO are very simple to implement in hardware and are seen as a viable alternative of LRU in CCN high-speed routers \cite{martina}.

\section{Flash-aware Optimal Offline Caching}
\label{sec:optoffpol}
In this section, we consider the case of offline caching. In this case, the file request string is given ahead if time as a sequence of positive integer-valued indices chosen from a set of $M$ files.  
Recall the FiF algorithm by Belady \cite{belady} which is known to minimize the number of request misses for a cache. Our contribution is in showing that FiF is not optimal with respect to damage. Further, we advance the state-of-the-art by constructing the $\text{DARE}$ caching policy which minimizes flash damage by taking no more delay (cache-misses) than Belady's FiF (i.e. the known optimal delay benchmark).

\subsection{System model}
\label{sec: sys_model_off}
In this section, we assume that time of horizon length $T$ is slotted in equal length intervals, and files are requested at the beginning of each slot. A requested file that is not in the cache results in a cache miss. Upon a cache miss, the requested file is fetched and subsequently written in the cache for \emph{at least one slot}. Writing the requested file on \emph{every} miss is called \emph{cache miss allocation} \cite{Pritchett} in device literature. We lift this assumption in Section \ref{sec:policyperformance} where the policy is allowed to skip writing the requested file.

\par The total cost of fetching and writing file $m$ upon a cache miss consists of a delay cost $\delta(m)\in \mathbb{Z}^+$ and retention (writing) cost $f(m)\in \mathbb{Z}^+$ as defined in Section \ref{sec:cost_jhamela}. We assume that the delay cost $\delta(m)=1$ unit for all $m\in M$. With this assumption minimizing delay corresponds to minimizing the number of cache misses. 
Let the one-shot retention cost caused due to writing file $m\in M$ for a retention time $R\geq 1$ slots, $R\in \mathbb{Z}^+$ be an increasing, convex function given by $f(R)\in \mathbb{Z}^+$. Thus, the total cost is given by the sum of one-shot delay and retention costs for every slot in the horizon corresponding to a cache miss, i.e. $\text{offline cost}=\sum_{t=1}^{T}{1_{(m,t)}(1+f(R_m))}$, where $1_{(m,t)}=1$ if there was a cache miss on file $m$ at time $t$ and 0 otherwise. 

\par Let $F, E$ denote the optimal number of cache misses, the corresponding eviction sequence according to FiF policy. Our goal is to find a policy that determines the \emph{optimal retention times} for each file write without exceeding $F$.

\subsection{The optimal offline policy, DARE}
\par DARE aims to reduce the cache retention times without changing the cache miss sequence from FiF. It considers every eviction in the optimal eviction sequence given by FiF policy and works backward to find the optimal retention for \emph{each} file write. When a file $l$ is evicted in FiF at time $t$, DARE finds two different time indices by traversing back from $t$. First, it finds the \emph{latest} (time) slot when $l$ was written in the cache before getting evicted at $t$; we call it time $k$. Second, it searches for the time when $l$ was last requested before eviction at $t$, we call it time $j$. Our policy stores file $l$ in the cache at time $k$ for $j-k+1$ slots. Also, the files which are present in the cache (i.e. not evicted) till the last eviction are taken care of similarly. Thus, for each evicted file, DARE saves on the number of slots by storing a file for a retention time equal to the difference between the time when it was last requested from the time when it was written latest. Example~\ref{exmp:example} illustrates the algorithm.

\begin{exmp}
\label{exmp:example}
\normalfont
Consider a cache of size $B=3$ containing files $\{a,b,c\}$ at time $t=0$ with the request string in Table~\ref{tab:table}. 
\begin{table}[htbp]
\centering
\caption{Sequence of evictions and cache evolution with each request under DARE.}
\begin{tabular}{|l|l|l|l|} \hline
\textbf{Slot}&\textbf{Request}& \textbf{File evicted} & \textbf{Files in cache}\\
\hline
 1 & a & - & \{a,b,c\}\\
 \hline
 2 & e & b & \{a,c,e\}\\
 \hline
 3 & c & - & -do-\\
 \hline
 4 & a & - &-do-\\
 \hline
 5 & d & c & \{a,d,e\}\\
 \hline
 6 & a & - & -do-\\
 \hline
 7 & b & d & \{a,b,e\} \\
 \hline
 8 & e & - & -do-\\
 \hline
 9 & a & - & -do-\\
 \hline
\end{tabular}
\label{tab:table}
\end{table}
\par For each eviction, DARE calculates the retention time backwards. Consider slot 5 when a request for file $d$ results in a miss, and file $c$ is evicted as per the solution of FiF. We find the last time when $c$ was requested, i.e. $j-1=3$, i.e. $j=4$.  Note that, file $c$ was in cache starting from time $t=0$. Hence, file $c$ will be written for time $j-k=4-0=4$ slots. Similarly, it is easy to see that the output from DARE is to write both files $a$ and $e$ for 9 slots (since, files $a$ and $e$ are never evicted) and files $b$, $d$ for 1 slot each.
\end{exmp}

\subsection{DARE is optimal}
\par We observe that DARE incurs optimal number of cache misses (by definition). Thus, for optimality we only need to prove the non-existence of a policy which incurs less cost than DARE in choosing retention times for files without exceeding the optimal number of cache misses. 

\begin{theorem}
\label{thm:opt_off}
DARE is optimal with respect to retention cost over all possible optimal eviction sequences that minimize the number of cache misses.
\end{theorem}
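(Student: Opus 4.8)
The plan is to decouple the two cost terms and then reduce everything to a statement about retention cost alone. Since Belady's FiF already attains the minimum possible number of cache misses $F$, any policy that ``does not exceed $F$'' must incur \emph{exactly} $F$ misses and therefore exactly $F$ units of delay cost; DARE inherits FiF's miss sequence and so also incurs exactly $F$ misses. Hence the total cost of any feasible competitor and of DARE differ only in their retention cost, and it suffices to prove that DARE's retention cost is smallest among all policies that realize $F$ misses.

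Next I would put an arbitrary feasible policy $P$ into a normal form that cannot increase its retention cost. I claim we may assume every write in $P$ is \emph{tight}: a file written at slot $k$ is retained exactly until the last slot $j$ at which it is requested before it next leaves the cache, i.e.\ for $j-k+1$ slots. Truncating an over-long retention down to this value keeps the file present for every request it was actually serving (so no hit becomes a miss and the miss sequence is unchanged), only frees cache space earlier (which never creates a new miss), and strictly lowers $f$ by monotonicity. After this normalization the request string of each file $m$ is partitioned into maximal consecutive runs, each run served by exactly one write; the write serving a run $[t_a,\dots,t_b]$ has retention $t_b-t_a+1$, the retention cost contributed by $m$ is $\sum_{\text{runs}} f(t_b-t_a+1)$, and the number of runs of $m$ equals the number of misses on $m$ (plus one when $m$ sits in the cache at $t=0$). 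Note that DARE's assignment is precisely the tight assignment for the FiF eviction sequence $E$.

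The key structural fact I would isolate is that, since $f$ is convex with $f(0)=0$, it is superadditive, and merging two adjacent runs of a file never helps: if $[t_a,\dots,t_b]$ and $[t_{b+1},\dots,t_c]$ are adjacent runs then $t_c-t_a+1 \ge (t_b-t_a+1)+(t_c-t_{b+1}+1)$ because $t_{b+1}\ge t_b+1$, so $f(t_c-t_a+1)\ge f((t_b-t_a+1)+(t_c-t_{b+1}+1))\ge f(t_b-t_a+1)+f(t_c-t_{b+1}+1)$; dually, splitting a run at an internal gap only decreases cost, and the larger the gap the larger the decrease. Consequently, for a fixed global budget of $F$ misses, one wants the misses (run boundaries) of each file to fall at its largest gaps, and what remains is to show that the particular allocation of the $F$ misses among files that $E$ induces is at least as cheap as $P$'s.

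The main obstacle is exactly this last step --- converting ``FiF evicts the file whose next request is farthest in the future'' into ``FiF cuts each file's request string at a cost-minimizing set of gaps.'' I would handle it by an exchange argument in the style of the classical optimality proof of Belady's algorithm, carried along with retention-cost bookkeeping: take a feasible, tight, retention-optimal policy $P$, let $\tau$ be the first slot at which $P$ and FiF evict differently, and modify $P$ from $\tau$ onward to make its choice at $\tau$ agree with FiF's. The standard argument shows the miss count does not increase (so it stays $F$); the added observation is that FiF's choice at $\tau$ postpones the evicted file's re-fetch to a point no earlier than $P$'s, i.e.\ it moves the induced run boundary to a weakly larger gap, so by the superadditivity/largest-gap fact above the retention cost does not increase either. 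Iterating drives $P$ to the eviction sequence $E$ without ever raising retention cost, and on $E$ the tight retentions are precisely those output by DARE, which establishes the theorem.
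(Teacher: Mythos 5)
Your decomposition into delay plus retention cost and your tightness normalization are sound, and they reproduce the first half of the paper's argument (Lemma~\ref{belady-opt}, proved there by the contradiction that shrinking any DARE retention would create an extra miss; your truncation argument carries the same content). The genuine gap is in your final exchange step. The claim that switching $P$'s eviction at the first disagreement $\tau$ to FiF's choice cannot raise retention cost because FiF ``moves the induced run boundary to a weakly larger gap'' is false: FiF compares only the \emph{next}-request times of the candidate files, while the size of the gap being cut also depends on when each candidate was \emph{last} requested before $\tau$, which FiF ignores. Concretely, take $B=2$ and the request string $a$ at slot $1$, $b$ at slots $2$ through $9$, $c$ at slot $10$, $a$ at slot $11$, $b$ at slot $12$. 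At slot $10$ the cache is $\{a,b\}$ and FiF evicts $b$ (next request at $12$ versus $11$); with tight retentions and $f(R)=R^2$ this sequence costs $11^2+8^2+1+1=187$. The alternative sequence that evicts $a$ at slot $10$ and $c$ at slot $11$ also attains the optimal number of misses (four) but has tight retention cost $1+1+11^2+1=124$. So performing your exchange on this $P$ strictly \emph{increases} retention cost, i.e.\ the inequality you need fails exactly where it is needed. (Your auxiliary claim that ``the larger the gap the larger the decrease'' is also not valid for convex $f$: the change in $\sum f(\cdot)$ depends on the resulting run lengths, not only on the size of the gap removed.)

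For comparison, the paper proves the theorem via two lemmas: optimality of tight retentions for the FiF eviction sequence (your steps, essentially), and separately Lemma~\ref{arbit-ret}, which asserts that no other miss-optimal eviction sequence is cheaper, argued by an exchange claiming the one-shot retention costs are a permutation of those under FiF. You replace that second step by a monotone Belady-style exchange, and that is the part that breaks; note that in the instance above the two cost multisets $\{121,64,1,1\}$ and $\{121,1,1,1\}$ are not permutations of one another either, so the cross-sequence comparison is the genuinely delicate part of the statement and is not settled by the gap heuristic or a routine exchange. What your argument does establish is optimality of DARE's retentions \emph{given} the FiF eviction sequence; bridging to ``over all optimal eviction sequences'' needs a fundamentally different idea (or a weakening of the claim), and you should flag this rather than assert the exchange bookkeeping.
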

\begin{proof}
See Appendix B.
\end{proof}
\subsection{Numerical study: Cache miss versus damage}
The current practice in flash memory technology is to write all files with a very high retention (typically 1-10 years), however, for making a fair comparison among policies we assume that the policies LRU, FIFO, RND and FiF write a file exactly for the time till it is not evicted. Subsequently, we write a file for the calculated optimal retention duration for DARE. Our goal is to demonstrate the optimality of DARE against other caching policies.

\par We consider a horizon of length $T=10000$ slots where in each slot file $m$ is requested as per the IRM with probability $\frac{1/m^\alpha}{\sum_{j\in M}{{1/j^\alpha}}}, m\in M$, where $\alpha$ is the ZipF popularity coefficient. Usually, for web caches and data servers, the ZipF coefficient is found to vary from 0.65 (least skewed) to 1 (most skewed) \cite{fofack2014models}. Hence, we consider two extremes and set $\alpha=\{0.65, 0.95\}$. Files are requested from a catalogue containing $M=1000$ files and the cache size varies from 50 to 600 files. The damage function for writing files is assumed to be quadratic in retention time. We compute the aggregate damage and cache misses for $T$ slots by evaluating: $\text{damage}=\sum_{t=1}^{T}{1_{(m,t)}R_m^2}$, and $\text{cache miss fraction} = 1/T\sum_{t=1}^{T}{1_{(m,t)}}$, where $1_{(m,t)}=1$ when there is a cache miss for file $m$ at time $t$ and 0 otherwise. We plot the results in Figures \ref{fig:varcache1}, \ref{fig:varcache2} for $\alpha=0.65$ and $0.95$, repectively.

\par Recall that the cache misses (fraction) for both FiF and DARE are the same (by definition). Thus, it suffices to represent the cache miss variation by plotting a single curve, which is shown by the dotted curve in Figure \ref{fig:subvarcache2}. We observe that as the cache size increases, the fraction of cache misses decreases, as expected, and soon converges to a specific value in steady state. The higher the ZipF-$\alpha$, the sooner this fraction converges. We also note that a higher $\alpha$ results in a lower value of cache miss (fraction) in steady state. This can be briefly explained as follows. When $\alpha$ increases, the skewness in the file request arrivals increases, i.e. with $\alpha=0.95$ the popular files are more popular and the unpopular files are less popular, compared to $\alpha=0.65$. Thus, a highly skewed traffic, by sending fewer requests for unpopular files, begets a lower cache miss count. 
\par The solid lines in Figure~\ref{fig:subvarcache2} show that as the cache size increases, the damage values from both FiF and DARE increase, and gradually both of them converge to a specific value. This implies that the damage savings obtained, calculated as $\frac{\text{damage from FiF}}{\text{damage from DARE}}$ approaches one with the increase in cache size. We observe that for smaller caches, a damage savings of upto 2-3 folds can be achieved. We also compare DARE against LRU, FIFO and RND; simulating these policies result in significantly worse damage to the extent that it can not be shown on the figures with the same scale. Similar trends for the ZipF variation follow for the damage curve as observed for the cache miss (fractions) curve.

\begin{figure}[t]
\centering
\begin{subfigure}{.25\textwidth}
  \centering
  \includegraphics[width=.87\linewidth]{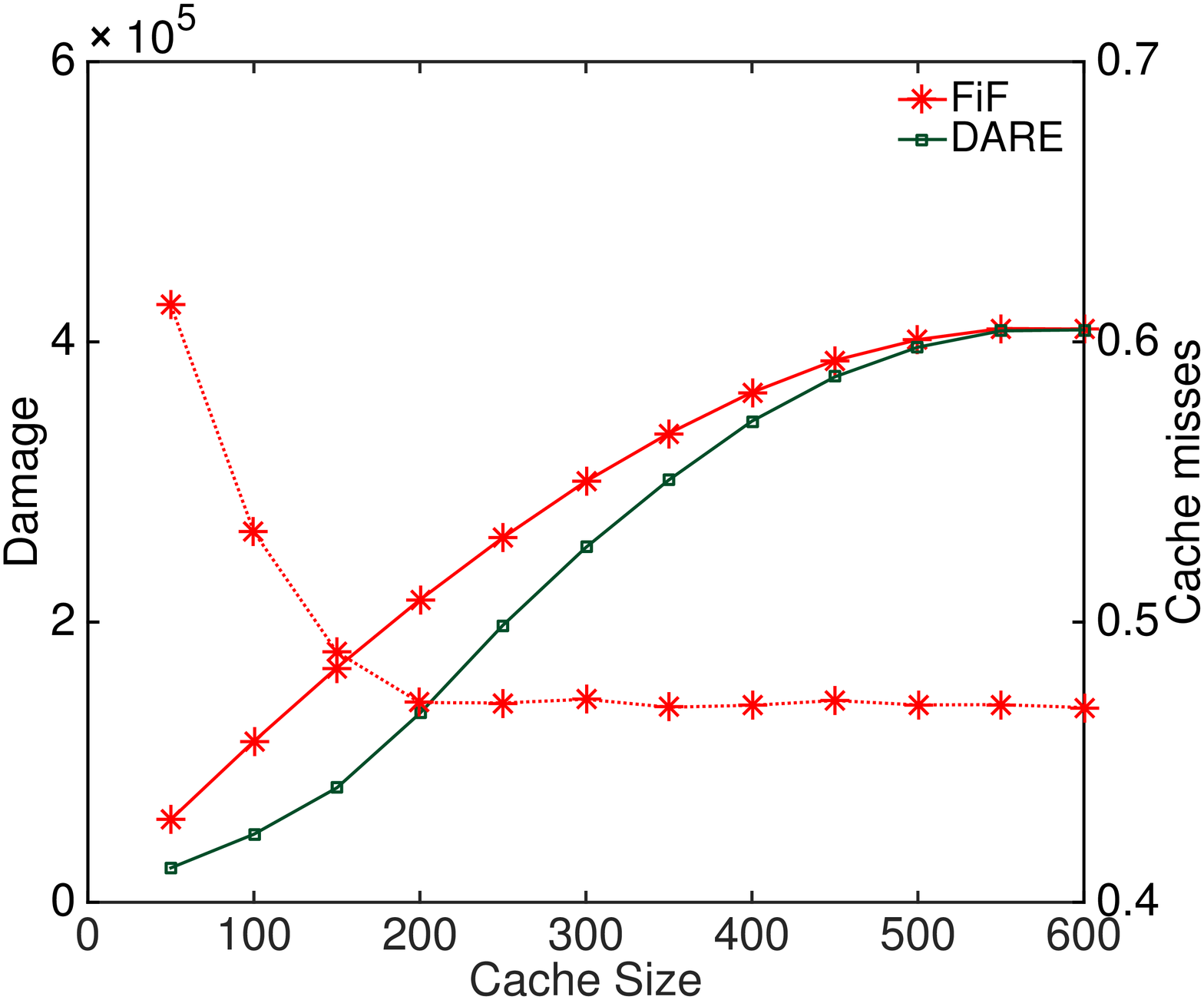}
  \caption{ZipF-0.65 popularity}
  \label{fig:varcache1}
\end{subfigure}%
~
\begin{subfigure}{.25\textwidth}
  \centering
  \includegraphics[width=.87\linewidth]{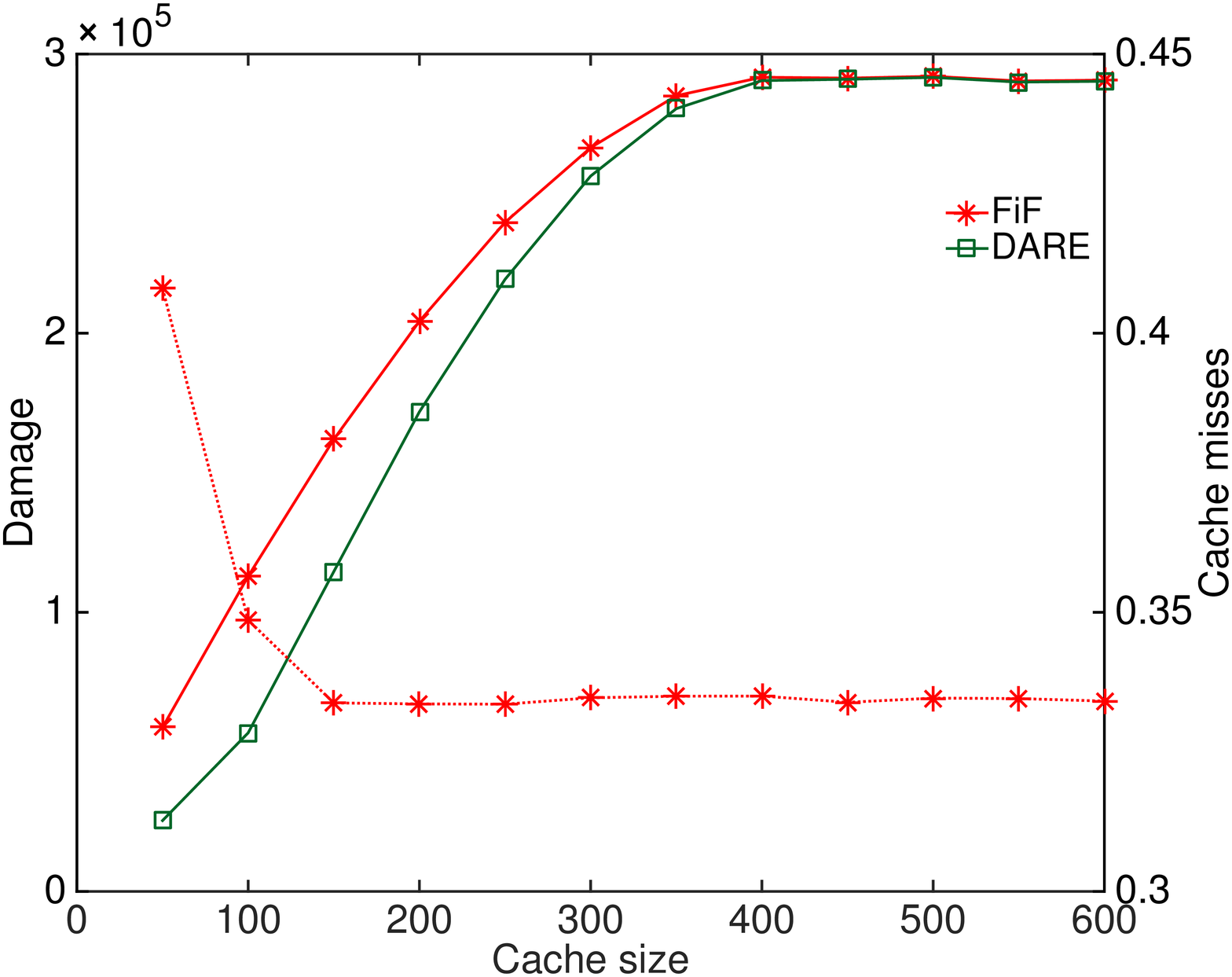}
  \caption{ZipF-0.95 popularity}
  \label{fig:varcache2}
\end{subfigure}
\caption{Cache damage vs. cache miss under IRM for B varying from 50 to 600 files with $|M|=1000$ files.}
\label{fig:subvarcache2}
\end{figure}

\par In this section, we showed that the well-known delay optimal caching policy (FiF) is not damage optimal. Further, we devised a caching policy that achieves optimal damage without exceeding the optimal number of cache misses given by the FiF policy. 
The case of offline caching lends insights to motivate the online caching problem where the arrival requests are not know apriori.

\section{Flash-aware Optimal Online Caching for Large Caches}
\label{sec:optonpol}
\par We now consider the case of online caching where the files are requested according to a distribution, however, the exact request string is not known to the policy apriori. We first state the system model for the online caching which applies to Sections \ref{sec:optonpol} and \ref{sec:fincache}. Our goal is to design a policy that jointly finds the optimal retention times for all files and the optimal eviction sequence in the event of a cache miss. We achieve this goal by designing a policy DARE-$\Delta$ which optimizes in two steps. First, in this section (Section \ref{sec:optonpol}), it approximates the problem by considering a \emph{large cache} (a cache with no capacity constraint and hence no evictions) and finds the optimal retention times. Subsequently, in Section \ref{sec:fincache}, it obtains the optimal file eviction sequence given the optimal retention durations. Note that the problem of jointly optimizing over all possible retention times and eviction decisions remains an open problem.
\par In this section, we formulate an optimization problem called DARE-$\Delta$ Retention Formulation (see Section \ref{damage_formulation}) to minimize cache damage subject to a constraint on the network delay to find optimal retention times. Our formulation provides an approximate solution due to the large cache assumption, however, our numerical studies in Section \ref{sec:policyperformance} show that the objective function quickly converges to steady state with increasing cache size. Having a large cache implies that there are no evictions and there is a cache miss on the requested file only if it has expired from the cache; this assumption\footnote{A large cache assumption was previously considered in \cite{fofack2014models, hari} in the context of TTL-caches.} is known to decouple files thus facilitating a tractable mathematical analysis \cite{fofack2014models, hari}. Finally, we conclude this section by illustrating damage-delay trade-offs for different damage functions.

\subsection{System model}
\label{subsec: on_model}
\subsubsection{Traffic model}
\par The file request string is assumed i.i.d. File requests arrive according to the Independent Reference Model (IRM) \cite{omri}, \cite{martina} which assumes the following. (1) All requests are for a fixed collection of $M$ files. (2) The probability of requesting file $m$ is $p_m$ which is static and independent of past or future requests. 

\par We assume that the interarrival times of file $m\in M$, $X(m)$, is exponentially distributed with rate parameter $\lambda_m$, and the arrival process across files conform to an independent and homogeneous Poisson process. Under IRM, the probability of requesting file $m$ with interarrival times $X(m)$ and modulated with ZipF-$\alpha$ popularity law is given by $p_m=\frac{\lambda_m}{\sum_{j=1}^{M}{\lambda_j}}$ where $\lambda_m=1/m^\alpha, \; \forall m\in M$.

\subsubsection{Cost of fetching and writing a file}
\par The total cost of fetching and writing file $m$ upon a cache miss consists of a delay cost $\delta(m)\in \mathbb{R}^+$ and a retention (writing) cost $f(m) \in \mathbb{R}^+$ as defined in Section \ref{sec:cost_jhamela}. The retention time for file $m$ is assumed to be distributed as an exponential random variable $\mathscr{R}(m)$ with parameter $\mu_m$, $m\in M$ to (1) keep the problem tractable and (2) to capture the property that writing a file in memory with a retention $R$ leaves a non-zero probabilty of finding it in cache after time $R$.
\par In the event of a cache miss, a one-shot retention cost is incurred (see Definition \ref{def:damage}). The cumulative \emph{retention cost} is defined as the sum of all one-shot retention costs.

\begin{definition}[One-shot retention cost]
The one-shot retention cost is the damage caused to the cache due to writing a file for a retention time $Z\in \mathbb{R}^+$, given by $f(Z) \in \mathbb{R}^+$ where $f(\cdot)$ is a convex increasing polynomial of degree $n$ given by $f(Z)=a_nZ^n+a_{n-1}Z^{n-1}+\dots+a_1Z+a_0$ with coefficients $a_i\geq 0$, for all $i\geq 1$ and $a_0=0$.
\label{def:damage}
\end{definition}

\subsection{Problem formulation for finding optimal retention times}
\label{damage_formulation}
\begin{definition}[Optimal online policy]
\label{def:optonpol}
A policy is online optimal if it finds the values of the retention parameters for each file (i.e. $\{\mu_m\}$) that minimizes the expected cache damage due to successive file writes under the constraint that the expected delay does not exceed $\Delta>0$.
\end{definition}

To find the optimal online policy (see Definition \ref{def:optonpol}), we first obtain an expression for the miss probability with a single file in the library ($|M|=1$) and consider the set of all requests to a cache in steady state. Let $\{\mathscr{R}_n\}$ denote\footnote{We denote the discrete retention time in the offline caching section as $R$ and the continuous retention for the context of online caching as $\mathscr{R}$.} the i.i.d. exponential retention time sequences corresponding to arrivals $n=1,2,\dots$ for the single file. Let $I_n$ be the indicator variable defined as follows:

$I_n = \begin{cases} 1 &\mbox{if } \text{ $n^{th}$ file request results in a cache miss } \\ 
0 & \text{ otherwise }\\
 \end{cases}$
 
\par Let $X_n{(m)}$ denote the i.i.d exponential interarrival time between the $n^{th}$ and ${n+1}^{th}$ request of file $m$. Note that $I_n=1$ corresponds to the event $X_n>\mathscr{R}_n$. Thus, $\lim_{N\rightarrow \infty} \frac{1}{N}\sum_{n=1}^{N}I_n=\mathbb{P}(X_n>\mathscr{R}_n)=p_{\text{miss}}=\frac{\mu}{\lambda+\mu}$. Similarly, the probability of a cache hit is, $p_{\text{hit}}= 1-p_{\text{miss}}=\frac{\lambda}{\lambda+\mu}$. For the $n^{th}$ file request, we write the file with retention $\mathscr{R}_{n+1}$ if there is a miss (and we do not write otherwise). Thus, the \emph{expected damage}, $D$, can be expressed as:
\begin{align}
D&= \lim_{N\rightarrow \infty}\left[\mathbb{E}_\mathscr{R} \left[ \frac{1}{N}\sum_{n=1}^{N}I_n\times f(\mathscr{R}_{n+1})\right]\right] \nonumber\\
&= \lim_{N\rightarrow \infty} \frac{1}{N}\sum_{n=1}^{N}I_n\times \mathbb{E}_\mathscr{R}\left[{f(\mathscr{R}_{n+1})}\right],\\
&= p_{\text{miss}}\times {\mathbb{E}_\mathscr{R}\left[{f(\mathscr{R}_{n+1})}]\right]}
\label{damage}
\end{align}
\noindent
which is true since $I_n$ is independent of the retention time $\mathscr{R}_{n+1}$\footnote{$I_n$ only depends on $X_n$ and $\mathscr{R}_n$ by definition.}. Similarly the expected delay constraint can be expressed as $p_{\text{miss}}\times \delta\leq \Delta $.
When $\mathscr{R}_{n+1} \sim exp(\mu)$ and $f(x)=x^2, x\geq 0$, then $\mathbb{E}_\mathscr{R}[\mathscr{R}_{n+1}]=2/\mu^2$, which is independent of $n$. Thus, for a single file, the goal is to minimize $p_{\text{miss}}\times \frac{2}{\mu^2}$ subject to the constraint $p_{\text{miss}}\times \delta \leq \Delta$.
\par We generalize the formulation obtained for a single file to the set of $|M|$ files. With IRM, the probability of requesting file $m$ is given by $p_m$, where $p_m=\lambda_m/\sum_{i\in M}{\lambda_i}, m\in M$. Also, the miss probability of file $m$ upon request is given by,
$p_{\text{miss}}(m)={\mathbb{P}(X(m)>\mathscr{R}(m))}= {\mu_m}/{(\mu_m+\lambda_m)}$, since the interarrival and retention times are exponentially distributed.
Thus the optimization problem becomes:
\begin{subequations}
\label{obj-dam}
\begin{align}
 &\minimize_{\mu_m\in \boldsymbol{\mu}} \sum_{m\in M}{{p_{\text{miss}}(m)}\times p_m}\times \mathbb{E}_\mathscr{R}\left[{f(\mathscr{R}(m)}\right] \\
 &\text{ subject to  }  \sum_{m\in M}{{{p_{\text{miss}}(m)}\times p_m}\times \delta(m)}\leq \Delta
 \end{align}
 \end{subequations}

Define $q_m:={\lambda_m}/{(\mu_m+\lambda_m)}, m\in M$, and substitute the value of the polynomial damage function, $f(x)=a_nx^n+a_{n-1}x^{n-1}+\dots+a_1x$ (as defined in (\ref{def:damage})) in the objective of formulation (\ref{obj-dam}). The objective becomes:
\begin{small}
\begin{align}
\frac{1}{\sum_{m\in M}{\lambda_m}}\sum_{m\in M}{q_m \mu_m \mathbb{E}[a_n\mathscr{R}(m)^n+\dots+a_1\mathscr{R}(m)]}\label{eqn:5}.
\end{align}
\end{small}
Note that $\mathbb{E}[a_k\mathscr{R}(m)^k]=a_kk!/\mu_m^{k}$ for $\mathscr{R}\sim exp(\mu)$. Also,
\begin{small}
\begin{align}
\frac{1}{\mu_m^k}=\frac{1}{(\mu_m+\lambda_m-\lambda_m)^k} = \frac{1}{(\frac{\lambda_m}{q_m}-\lambda_m)^k}= \left(\frac{q_m/\lambda_m}{1-q_m}\right)^k. \label{eqn:mu}
\end{align}
\end{small}
Therefore, substituting (\ref{eqn:5}), (\ref{eqn:mu}) in the objective in (\ref{obj-dam}a) gives:
\begin{align}
\frac{1}{\sum_{m\in M}{\lambda_m}}\sum_{m\in M}{q_m \sum_{k=1}^{n}{a_k k! \left(\frac{q_m/\lambda_m}{1-q_m}\right)^k}}
\label{eqn:objective}.
\end{align}
\noindent
Further, the constraint in (\ref{obj-dam}b) can be simplified as:
\begin{align}
\sum_{m\in M}{{\lambda_m \delta(m) \left(\frac{\mu_m+\lambda_m-\lambda_m}{\mu_m+\lambda_m}\right)}} 
&= \sum_{m\in M}{\lambda_m \delta(m) (1-q_m)}. \nonumber 
\label{eqn:constraint}
\end{align}

Now we present the final formulation.
\subsubsection*{$\mathrm{\mathbf{DARE-\Delta \; Retention\; Formulation}}$}
\begin{subequations}
\label{formulation:damage}
\begin{align}
\minimize_{q_m\in \mathbf{q}} & \frac{1}{\sum_{m\in M}{\lambda_m}}\sum_{m\in M}{ \sum_{k=1}^{n}{a_k k! \frac{q_m^{k+1}}{\lambda_m^k(1-q_m)^k}}}\\
\text {subject to:} \;  & \frac{1}{\sum_{m\in M}{\lambda_m}}\sum_{m\in M}{\lambda_m \delta(m) (1-q_m)}\leq \Delta
\\ 
&0 \leq q_m\leq 1, \forall \; m\in M
\end{align}
\end{subequations}
\begin{lemma}
\label{lem:convex-objective}
The objective function in the damage formulation (\ref{formulation:damage}) is convex.
\end{lemma}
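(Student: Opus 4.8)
The plan is to exploit the fully \emph{separable} structure of the objective. Writing $\Lambda:=\sum_{m\in M}\lambda_m>0$, the objective in (\ref{formulation:damage}a) equals $\tfrac1\Lambda\sum_{m\in M}\phi_m(q_m)$, where $\phi_m(q):=\sum_{k=1}^{n}\tfrac{a_k\,k!}{\lambda_m^{\,k}}\,g_k(q)$ and $g_k(q):=\tfrac{q^{k+1}}{(1-q)^{k}}$. Since each summand depends on a single coordinate, this function is convex on the box $\prod_{m}[0,1)$ if and only if each one-variable function $\phi_m$ is convex. By Definition~\ref{def:damage} we have $a_k\ge 0$, and $k!>0$, $\lambda_m^{\,k}>0$, so each $\phi_m$ is a nonnegative linear combination of $g_1,\dots,g_n$; hence it suffices to show that $g_k$ is convex on $[0,1)$ for every $k\ge 1$.

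For this, I would differentiate directly. Pulling out the common factor $q^{k-1}(1-q)^{-(k+1)}$ after one differentiation gives $g_k'(q)=q^{k}(1-q)^{-(k+1)}\bigl(k+1-q\bigr)$, and after a second differentiation one can factor out $q^{k-1}(1-q)^{-(k+2)}$, leaving a quadratic polynomial in $q$ inside the brackets. The one genuine computation is to check that the coefficients of $q$ and of $q^{2}$ in that bracket both cancel, so the bracket collapses to the positive constant $k(k+1)$; this yields $g_k''(q)=k(k+1)\,q^{k-1}(1-q)^{-(k+2)}\ge 0$ on $[0,1)$, proving convexity of $g_k$.

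A computation-free alternative for the same step is available and worth recording: $g_k(q)=(1-q)\bigl(\tfrac{q}{1-q}\bigr)^{k+1}$ is precisely the perspective of the convex map $s\mapsto s^{k+1}$ (convex on $s\ge 0$ since $k+1\ge 1$) evaluated at $s=q/(1-q)$, i.e.\ $g_k$ is that perspective precomposed with the affine map $q\mapsto(q,\,1-q)$, which sends $[0,1)$ into the region where the second coordinate is positive. As the perspective of a convex function is jointly convex and convexity is preserved under precomposition with an affine map, $g_k$ is convex on $[0,1)$. Either way, summing the convex $g_k$ with nonnegative weights and then over the decoupled variables $q_m$ gives convexity of the objective on $\prod_m[0,1)$; since the objective diverges to $+\infty$ whenever some $q_m\to 1$ (for any $k\ge 1$), extending it by $+\infty$ there keeps its epigraph convex, so convexity holds throughout the feasible region $\{0\le q_m\le 1\}$.

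I expect the only real friction to be the bookkeeping in the second-derivative computation — verifying the cancellation that reduces the degree-two bracket to the constant $k(k+1)$ — which is exactly what the perspective-function argument avoids.
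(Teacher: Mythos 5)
Your proposal is correct, and its main line is essentially the paper's own argument: exploit separability across the $q_m$, note that each coordinate function is a nonnegative combination (since $a_k\ge 0$, $k!>0$, $\lambda_m^k>0$) of the terms $q^{k+1}/(1-q)^k$, and verify nonnegativity of the second derivative on $[0,1)$ — your computed $g_k''(q)=k(k+1)\,q^{k-1}(1-q)^{-(k+2)}$ agrees with the expression in Appendix A after cancelling $(1-q)^k$. The two things you add beyond the paper are worthwhile: the perspective-function argument ($g_k$ is the perspective of $s\mapsto s^{k+1}$ precomposed with $q\mapsto(q,1-q)$) gives the same convexity with no calculus, and your extended-value remark at $q_m=1$ cleanly covers the closed feasible box $0\le q_m\le 1$, a boundary case the paper's derivative argument (stated only for $q_m\in[0,1)$) leaves implicit.
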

\begin{proof}
See Appendix A.
\end{proof}
The constraint in formulation (\ref{formulation:damage}) poses upper and lower bounds on the value of $q_m={\lambda_m}/{(\mu_m+\lambda_m)}$. 
The boundary cases are: when $q_m= 0$ then $\mu_m=\infty$ which means that files are never written into the cache; alternatively, $q_m=1$ implies $\mu_m=0$ meaning that the file is retained forever. 
Once we obtain optimal $q_m$'s, the optimal $\mu's$ can be obtained by letting $\mu_m={\lambda_m(1-q_m)}/{q_m}$. The objective function in the optimization problem in (\ref{formulation:damage}) is convex (see Lemma~\ref{lem:convex-objective}).
We use a MATLAB convex program solver to solve (\ref{formulation:damage}) and report the results in Figure \ref{fig:difffunc}. We now summarize our numerical results.

\subsection{Damage-delay trade-offs for various damage functions with DARE-$\Delta$}
\par We study the delay-damage trade-offs obtained for three polynomial\footnote{The problem of finding suitable coefficients for the polynomial damage function could be an independent research problem by itself (left as an open problem for device engineers \cite{liu2012optimizing}) and is thus not considered in this work. Our work is concerned with finding optimal caching policies \emph{given} any polynomial damage function.} damage functions (linear, quadratic and cubic) on Poisson arrivals modulated with ZipF popularities ($\lambda_m=1/m^\alpha$, $\alpha=0.85$). We assume a unit delay for fetching files ($\delta(m)=1, m\in M$) for exposition. Note that with a unit delay we have $\Delta=\epsilon$, where $\epsilon\in[0,1]$ denotes the expected fraction of cache misses. We study the damage function trade-off with increasing $\epsilon$ for an increasing number of files $|M|$, as shown in Figure \ref{fig:difffunc}.

\par We observe that damage decreases with increasing $\epsilon$ in each case. This is reasonable since a higher $\epsilon$ means a relaxed delay constraint which implies that now more files can be written with lower retention values thus incurring less damage. We also observe that the value of damage increases with increasing number of files for the same value of $\epsilon$, which is expected as now more files are written in the cache (causing a higher damage) to achieve the required $\epsilon$. 

\begin{figure}
  \centering
  \includegraphics[scale=0.30]{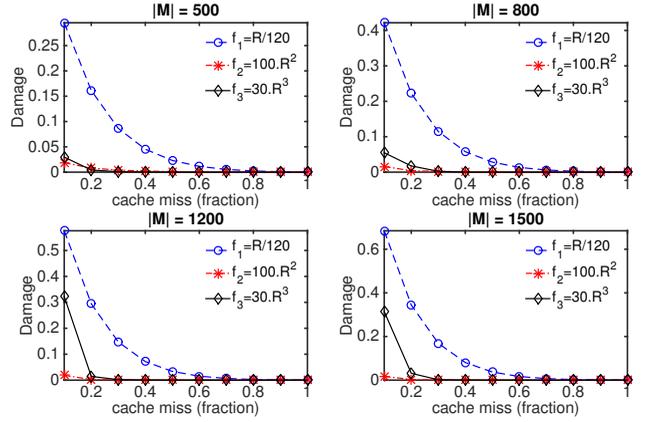}
  \caption{The figure shows the objective function values for Poisson arrivals modulated with ZipF $\alpha=0.85$ when plotted against increasing (allowed) fraction of cache misses, $\epsilon$, for a unit delay.} 
 \label{fig:difffunc}
\end{figure}

\section{Flash-aware Optimal Online Caching for A Finite Capacity Cache}
\label{sec:fincache}
In this section, we use the same model as defined in Section \ref{subsec: on_model} with the only difference that now the cache is finite and can contain only $B$ files. Upon a cache miss, a file is written in the cache for a duration given by the optimal retention time obtained in Section {\ref{sec:optonpol}}. A cache miss can result in a file eviction if the cache is full. We aim to obtain the optimal file to evict on every cache miss when the cache is full using only the knowledge of the past requests and cache contents. We formulate the problem of finding an optimal eviction sequence as a sequential decision problem using the theory of Markov Decision Processes (MDP). We then characterize the optimal solution which results in a very simple, easy to implement rule. We conclude the section by giving an outline of the DARE-$\Delta$ policy and comparing its performance with LRU, FIFO, RND policies.

\par Our work is a significant generalization of \cite{omri} where the authors have proposed a stationary, Markovian policy to optimally evict a file when files have non-uniform costs and the cache is finite. In contrast with \cite{omri} where the files are evicted \emph{only} upon a cache miss when the cache is full, in our model files leave the cache not only because they are evicted but \emph{also} because their retention time has \emph{expired}. Although subtle, this difference is significant as the minimization is performed over different file sets in both the cases. Hence the optimal solution in \cite{omri} is not a solution to our problem and vice versa. Moreover, modeling retention time for every file makes the analysis significantly more involved.

\subsection{Markov Decision Process}
\subsubsection{State Description}
\par We construct an MDP on a continuous time, discrete state space and use uniformization \cite{omri} to obtain a discrete time Markov chain (DTMC) from the continuous Markov process. Let $t=1,2,\dots T$ denote the time indices corresponding to the state transitions marked by file arrivals and file departures. Let $\mathbf{S}(t)$ be a state in the Markov Chain denoted by a 3-tuple, $\mathbf{S}(t)=\{S(t),R(t),D(t)\}$, where $S(t)$ is the set of files in the cache at $t$, $R(t)$ denotes the file requested at time $t$ and $D(t)$ is the first file departing at time $t$. We assume that a transition is either due to a file arrival or a file departure and \emph{not} both. For a file arrival, $D(t):=0$ and for a file departure, $R(t):=0$. Thus, the states of the MDP are of the form $\{S(t),R(t),0\}$ or $\{S(t),0,D(t)\}$. Files leave the cache either because they are evicted or because their retention time expires. A file whose retention time expires is said to \emph{depart} from the cache.

\par The cache state transitions can be summarized as follows. When file $D(t)$ departs from the cache $S(t)$, the cache becomes $S(t)-D(t)$. If there is a file arrival which results in a cache hit (i.e. $R(t)\in S(t)$) then the cache content at time $t+1$ is the same as that at time $n$ (i.e. $S(t+1)=S(t)$). In the case of a cache miss, two cases arise: (1) if the cache is not full then the new file gets added to the cache, i.e.  $S(t+1)=S(t)+R(t)$; (2) If the cache is full, then, the state at time $t+1$ is $S(t)+R(t)-U(t)$ where $U(t), U(t)\in S(t)+R(t)$ is the random variable denoting the file evicted on $n^{th}$ arrival on a full cache. Note that we assume \emph{optional} evictions, i.e. the policy may not evict a stored file upon a cache miss (in which case we say that the requested file $R(t)$ itself is instantaneously evicted). 
Formally,

$S(\text{$t+1$}) =T(\mathbf{S}(t), U(t))\\
=\begin{cases} S(t) &\mbox{if } R(t)\in S(t), |S(t)|\leq B \\ 
S(t)+R(t) & \mbox{if } R(t)\notin S(t), |S(t)|< B\\
S(t)+R(t)-U(t) & \mbox{if } R(t)\notin S(t), |S(t)|= B\\
S(t)-D(t) &\mbox{if } R(t)=0, |S(t)|\geq 1
\end{cases}$\\

Our goal is to find the optimal eviction sequence $U(t)$, $t=1,2,\dots, T,$ using MDP by using the optimal values of $D(t)$ (i.e.  the retention times $\sim exp(\mu_j)$, $j\in M$) obtained in Section \ref{sec:optonpol}.

\subsubsection{Markovian Policy}
\par It is easy to see the state $\mathbf{S}(t+1)$ only depends on state $\mathbf{S}(t)$ and $U(t)$. Thus, we need to focus only on Markovian policies (deterministic or randomized) that give optimal eviction sequences. 
Let $\mathcal{P}$ denote the set of all Markovian policies for evicting files. A policy $\pi\in \mathcal{P} $ is of the form $\pi=\{\pi_1, \pi_2,\dots\ \pi_T\}$, where each $\pi_t$ is a mapping from state $\mathbf{S}(t)$ to the evicted file in $\{0,1,\dots, M\}$, \emph{i.e.} $U(t)=\pi_t(\mathbf{S}(t))$. We define $U(t):=0$ when: (1) no eviction decision needs to be made (i.e. $R(t)\in S(t)$); (2) there is a cache miss and $U(t)$ refers to a file not present in cache or request (i.e. $R(t)\notin S(t)$ and $U(t)\notin S(t)+R(t)$). Let $\pi_t(u, \mathbf{S}(t))$ be the probability that policy $\pi$ evicts file $u$ in state $\mathbf{S}(t)$ on $n^{th}$ arrival, where $u \in \mathcal{M}$, then  $\pi_t(u, \mathbf{S}(t))$ satisfies the following properties: 
\begin{subequations}
\label{policy}
\begin{align}
&\sum_{u\in M}{\pi_t(u, \mathbf{S}(t))}=1, \nonumber \\
&\pi_t(u, \mathbf{S}(t))=0\;\; \forall u>0 \text{ if } R(t)\in S(t), \nonumber\\
&\pi_t(u, \mathbf{S}(t))=0\;\; \forall u: u \notin S(t)+R(t), R(t)\notin S(t), \nonumber \; \;
\end{align}
\end{subequations}
\subsubsection{State transition probabilities}
For our DTMC with state transitions due to file request arrivals and file departures, the probability of leaving a state due to an arrival of file $r$ is given by $\hat{p}_r=\lambda_r/{\sum_{m\in M}{(\lambda_m+\mu_m)}}$ and due to a departure of file $d$ is $\tilde{p}_d=\mu_d/{\sum_{m\in M}{(\lambda_m+\mu_m)}}$ since files have exponential interarrivals and retentions (as defined in Section \ref{subsec: on_model}). Let $\mathbf{p}$ denote the pmf of these probabilities. Let $\mathbf{P}_
{\pi}, \mathbf{E}_{\pi}$ denote the probability measure, expectation (respectively) under pmf $\mathbf{p}$ and policy $\pi$ and let $\mathbf{1}[\cdot]$ be the indicator function then we derive the state transition probabilities as follows:
\begin{align}
& \mathbf{P}_{\pi}[U(t)=u|\mathbf{S}(t)]=\pi_t(u, \mathbf{S}(t)), u\in M  \label{tran-1} \\
&\mathbf{P}_{\pi}[S(t+1)=\tilde{S}, R(t+1)=r, D(t+1)=0|\mathbf{S}(t), U(t)] \nonumber \\
 &= \hat{p}_r\times  \mathbf{P}_{\pi}[S(t+1)=\tilde{S} |\mathbf{S}(t), U(t)] \nonumber \\
 &=\hat{p}_r\times \mathbf{1}[T(\mathbf{S}(t),U(t))=\tilde{S}] \label{tran}\\
&\mathbf{P}_{\pi}[S(t+1)=\tilde{S}, R(t+1)=0, D(t+1)=d|\mathbf{S}(t)] \nonumber \\
 &= \tilde{p}_d \times  \mathbf{P}_{\pi}[S(t+1)=\tilde{S} |\mathbf{S}(t)] \label{tran-2}
\end{align} 
\par Equations (\ref{tran-1})-(\ref{tran}) follow since IRM file arrivals are independent of the state of the cache and the time of the request.  
Equations (\ref{tran})-(\ref{tran-2}) apply for every $(\tilde{S}, r, d)\in \mathbf{S(t+1)}$. 

\subsubsection{Cost function}
\par A one-shot cost $c(m)$ for file $m$ (as in Section \ref{sec:optonpol}) is incurred on every cache miss. The expected cost for the horizon of length $T$ under the policy $\pi$ becomes:
\begin{align}
J_c(\pi,T)=\mathbf{E}_{\pi}\left[\sum_{t=0}^{T}\mathbf{1}_{[R(t)\notin S(t)]}\times c(R(t))\right] \nonumber 
\end{align}

\par The average cost over the horizon of $T$ discrete time steps under policy $\pi$ is given by,
$J_c(\pi)=\limsup_{T \rightarrow \infty}\frac{{\sum_{m\in M}{(\lambda_m+\mu_m)}}}{T+1}J_c(\pi,T)$\footnote{It is possible that with an arbitrary policy $\pi$ the limit may not exist, therefore we use supremum which is a standard practice in the MDP literature.}.

\subsection{The Optimal Eviction Policy}
Now we formulate and solve the MDP to find an optimal eviction policy. We define $J_c(\pi,(S,R,D),T)$ as the cost-to-go for the policy $\pi$ starting in the state $\mathbf{S}=\{S, R, D\}$. Minimizing $J_c(\pi,(S,R,D),T)$ at every possible state will give us the optimal eviction policy.
\begin{align}
& J_c(\pi,(S,R,D),T):= \nonumber\\
&\mathbf{E}_{\pi}\left[\sum_{t=0}^{T}\mathbf{1}_{[R(t)\notin S(t)]}\times c(R(t)|\mathbf{S}(0)=\{S, R, D\})\right]. \nonumber
\end{align}
We will use the \emph{value iteration} approach to solve our problem. The value function minimizes cost-to-go over all policies, i.e. $V_T(S,R,D)=\inf_{\pi\in \mathcal{P}}{J_c{(\pi,(S,R,D),T)}}$.

\par Next, we write the Dynamic Programming Equation (Bellman equation) for this MDP. We form two different recurrence equations for the states of type $(S,r,0)$ and $(S,0,d)$, each accounting for a file request and a departure (recall that no other types of states are possible as we have assumed that file requests and departures are mutually exclusive). We first state the recurrence equations followed by the explanation:
\small
\begin{align}
\label{dpe}
\nonumber
V_{T+1}(S,r,0)&=\mathbf{1}_{\{r\in S\}}\mathbb{E}_{R^*}[V_T(S,R^*,0)] \\  \nonumber 
&+ \mathbf{1}_{\{r\notin S,|S|<B\}}\left(c(r)+\mathbb{E}_{R^*}[V_T(S+r,R^*,0)]\right) \\ \nonumber
&+ \mathbf{1}_{\{r\notin S,|S|=B\}}  \nonumber\\
&\left(c(r)+\min_{u\in S+r}\mathbb{E}_{R^*}[V_T(S+r-u,R^*,0)]\right), \nonumber \\
&+ \mathbf{1}_{\{r\in S\}}\mathbb{E}_{D^*}[V_T(S,0,D^*)]\nonumber \\  
&+ \mathbf{1}_{\{r\notin S,|S|<B\}}\left(c(r)+\mathbb{E}_{D^*}[V_T(S+r,0,D^*)]\right) \nonumber\\ 
&+ \mathbf{1}_{\{r\notin S,|S|=B\}}\nonumber\\
&\left(c(r)+\min_{u\in S+r}\mathbb{E}_{D^*}[V_T(S+r-u,0,D^*)]\right)\\
V_{T+1}(S,0,d)&=\mathbb{E}_{R^*}(V_T(S-d,R^*,0)) \nonumber\\
&\; +\mathbb{E}_{D^*}(V_T(S-d,0,D^*))
\label{dpe2}
\end{align}
\normalsize
The different terms in (\ref{dpe})-(\ref{dpe2}) can be explained as follows:
\begin{itemize}
\item $V_{T+1}(S,r,0)$ is the value of the objective when optimal action is taken in the state $S,r,0$ at time $t=0$ to minimize the cost over the horizon $[0,T+1]$.  
The first (or fourth) term in the sum says that when file request $r$ belongs to the cache $S$ then the expected cost for horizon $[0,T]$ due to a file request $R^*$ (or a departure $D^*$) at time $t=0$ is given by  $\mathbb{E}_{R^*}[V_T(S,R^*,0)]$ (or $\mathbb{E}_{D^*}[V_T(S,0,D^*)])$.
\item The second and fifth terms differ from the above in that the file $r$ requested at $t=0$ leads to a cache miss but the cache is not full so the requested file is written in the cache (without any eviction) thus increasing the expected cost over the horizon $[0,T]$ by $c(r)$.
\item The third and sixth terms capture the case when the cache is full at $t=0$ and there is a cache miss upon request thus leading to a file eviction. The expected cost over the horizon $[0,T]$ is thus obtained by minimizing over all possible evictions, i.e. $u\in S+r$.
\end{itemize}
\par Equation \ref{dpe2} represents file $d$ departing from the cache at time $t=0$. Here no cost is incurred over the horizon $[0,T+1]$ since we do not fetch or write a new file. The two terms in the sum infer that the next state could be due to a file request or a file departure at $t=1$.

\begin{figure}
\centering
\includegraphics[scale=0.26]{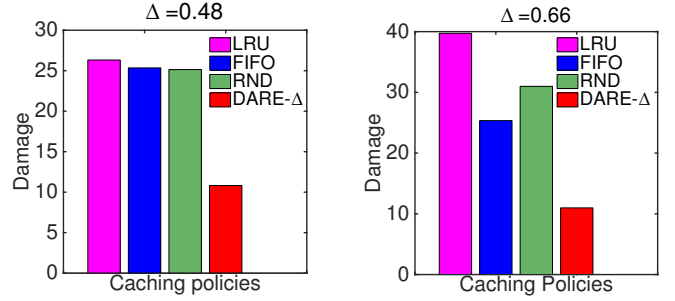}
\caption{Damage values for LRU, FIFO, RND and MDP with Poisson arrivals: (1) ZipF-0.65 popularity,  $\Delta=\epsilon=0.48$, (2) ZipF-0.95 popularity, $\Delta=\epsilon=0.66$.}
\label{fig:D_1}
\end{figure}
  
\par By inspection, we observe that in Bellman equations~(\ref{dpe})-(\ref{dpe2}), we \emph{only} need to optimize the term: 
$\min_{u\in S+r}\mathbb{E}_{R^*}[V_T(S+r-u,R^*,0)] \label{rec}$\footnote{There are two minimization terms in Bellman equation, however, the second term can be minimized only by minimizing the first term due to the recurrence relation. See details in the Proof of Theorem~\ref{main-theorem} in Appendix.}. Theorem~\ref{main-theorem} characterizes the optimal eviction policy obtained from the minimization.

\begin{theorem}
To minimize the expected cost over the horizon $[0,T]$, the optimal eviction policy evicts a file $v$ in the state $(S,r,0)$ that satisfies the following:
\begin{align}
v=\arg\min_{u\in S+r} \{p_uc(u)\}
\end{align}
whenever the cache is full and there is a cache miss on the request $r$ (i.e. $r\notin S$). 
\label{main-theorem}
\end{theorem}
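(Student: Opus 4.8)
By the footnote to Theorem~\ref{main-theorem}, it suffices to solve the single inner minimization $\min_{u\in S+r}\mathbb{E}_{R^*}[V_T(S+r-u,R^*,0)]$ appearing in (\ref{dpe}); once its minimizer is identified, the companion minimization over $D^*$ is forced, since the departure recurrence (\ref{dpe2}) rewrites every departure-state value as an expectation of request-state values one epoch later. I would therefore argue by induction on the value-iteration horizon $T$, carrying a \emph{$p\!\cdot\!c$-monotonicity} invariant on the value function. Writing $W_T(A):=\mathbb{E}_{R^*}[V_T(A,R^*,0)]+\mathbb{E}_{D^*}[V_T(A,0,D^*)]$ for a generic cache content $A$, the invariant is: for any two equinumerous caches $A$ and $A'=(A\setminus\{a\})\cup\{b\}$ with $a\in A$, $b\notin A$ and $p_bc(b)\ge p_ac(a)$, one has $W_T(A')\le W_T(A)$ (``replacing a resident by one of weakly larger $p\!\cdot\!c$ never hurts''), together with $W_T$ being weakly decreasing under cache enlargement. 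The base case $T=0$ is immediate. Granting the invariant at level $T$, the inner minimization $\min_{u\in S+r}W_T(S+r-u)$, ranging over the $B$-subsets of $S+r$, is attained by discarding the element of smallest $p\!\cdot\!c$, i.e. $v=\arg\min_{u\in S+r}p_uc(u)$, which is exactly the claimed rule.

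The heart of the argument is propagating the invariant across one application of (\ref{dpe})--(\ref{dpe2}), i.e. showing $W_{T+1}(A')\le W_{T+1}(A)$. Expand both sides through the uniformized transition probabilities $\hat p_\cdot,\tilde p_\cdot$ and match terms by the type of the next event. The decisive simplification is that $\min_{u\in S+r}W_T(S+r-u)$ depends on $S$ and $r$ \emph{only through the set $S+r$} (optional eviction of $r$ is already one of the choices), so the ``miss on $b$'' term for cache $A$ and the ``miss on $a$'' term for cache $A'$ are literally equal, the ``departure of $a$'' term for $A$ and ``departure of $b$'' term for $A'$ both collapse to $W_T$ of their common part $A\cap A'$, and each remaining request or departure term is controlled by the induction hypothesis applied to a smaller swap of the same $a$-versus-$b$ type (using enlargement-monotonicity for the terms that shrink the cache). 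Collecting the survivors leaves a controlled remainder: a nonnegative multiple of $p_bc(b)-p_ac(a)$, a nonnegative multiple of $W_T(A)-W_T(A')$ (nonnegative by the level-$T$ invariant), plus a cross term weighted by $\hat p_a-\hat p_b$ and $\tilde p_b-\tilde p_a$.

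The main obstacle is exactly this cross term: two files can be comparable in $p\!\cdot\!c$ yet differ in arrival rate $\lambda$ and retention rate $\mu$, so $\hat p_a-\hat p_b$ and $\tilde p_b-\tilde p_a$ carry no a priori sign, and one must still show the whole remainder is $\ge 0$ --- either by strengthening the inductive invariant to track quantitatively how $W_T$ separates caches differing by a single file (a rate-weighted refinement of the enlargement bound), or by reducing to \emph{adjacent} swaps between $p\!\cdot\!c$-consecutive files so that the perturbations in $\lambda$ and $\mu$ are absorbed. An alternative route that bypasses the value-function bookkeeping is a direct interchange argument on an average-cost-optimal stationary policy: if it ever evicts a file $u$ while some $w\in S+r$ has $p_wc(w)<p_uc(u)$, run a coupled copy that evicts $w$ instead, keep the two caches differing only in $\{u\}$ versus $\{w\}$ by mimicking all later evictions, and re-synchronize the first time $u$ or $w$ is requested or departs; the expected cost change then telescopes to a nonpositive multiple of $p_wc(w)-p_uc(u)$. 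There the obstacle shifts to justifying the re-synchronization (it relies on evictions being \emph{optional} in our model, unlike in \cite{omri}) and to ruling out the complication that a retention-driven departure removes one of $u,w$ before either is requested.
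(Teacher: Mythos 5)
Your skeleton coincides with the paper's: induction on the value-iteration horizon, a term-by-term comparison after expanding the Bellman recursion (\ref{dpe}), and the observation that the departure-side minimization $\min_{u\in S+r}\mathbb{E}_{D^*}[V_T(S+r-u,0,D^*)]$ collapses, via (\ref{dpe2}), back onto the request-side one (the paper's treatment of its ``second minimization term'' is exactly your opening reduction). But the proposal stops short of a proof at the very point you flag yourself: the leftover cross terms whose sign depends on $\hat p_a-\hat p_b$ and $\tilde p_b-\tilde p_a$ are never shown to be dominated. You list three candidate repairs (a quantitatively strengthened invariant, restriction to adjacent swaps, a coupling/interchange argument), acknowledge the obstruction in each, and carry out none of them; in particular the swap-monotonicity invariant on $W_T$ that your inductive step relies on is asserted but not propagated. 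The eviction rule is therefore derived from an unproven hypothesis, which is a genuine gap rather than a presentational one.

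For comparison, the paper does not prove your general two-cache swap monotonicity. It fixes $v=\arg\min_{u\in S+r}\{p_uc(u)\}$ and inducts directly on the pairwise statement $\mathbb{E}[V_T(S+r-u,R^*,0)]\ge\mathbb{E}[V_T(S+r-v,R^*,0)]$ for all $u\in S+r$. After expanding $\mathbb{E}[V_{T+1}(S+r-u,R^*,0)]$ into hit, miss-with-room and miss-when-full contributions, most differences are signed by the induction hypothesis together with $p_uc(u)\ge p_vc(v)$ (e.g.\ the miss-with-room cost difference reduces, via (\ref{eq:key2}), to $2p_S(B)\,(p_uc(u)-p_vc(v))\ge 0$). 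The two residuals that cannot be signed individually --- $p_v\mathbb{E}[V_T(S+r-u,R^{**},0)]-p_u\mathbb{E}[V_T(S+r-v,R^{**},0)]$ from the hit term and $(1-p_S(B))\,\bigl(p_u\widehat{V}_T(S+r-u,u,0)-p_v\widehat{V}_T(S+r-v,v,0)\bigr)$ from the full-cache miss term --- are added together, and using $\widehat{V}_T(S+r-u,u,0)=\mathbb{E}[V_T(S+r-v,R^{**},0)]$ (reduction (\ref{eq:key4})) their sum is nonnegative once $2p_u\ge p_S(B)(p_u-p_v)$, which holds whatever the sign of $p_u-p_v$. So the sign-indeterminate quantities you were blocked by are neutralized by this specific pairing and cancellation of residuals, not by a stronger invariant or a coupling. (Your worry is not idle: the cross terms are genuinely delicate, and even the paper's handling of the miss-then-insert comparison invokes an ordering of $p_u$ versus $p_v$.) To complete your argument along your own lines you would need to exhibit an analogous cancellation, and your coupling alternative would additionally have to control retention-driven departures --- exactly the complication you concede.
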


\begin{proof}
See Appendix C.
\end{proof}
\par Theorem \ref{main-theorem} characterizes a \emph{stationary} optimal  Markov eviction policy which suggests evicting a file that is requested least often and can be fetched, written with the least cost. Intuitively, the eviction rule seems fairly reasonable. We note that the cost function $c(u)$ is general as it can be easily extended to represent an convex combination of various factors such as file size, avaliable bandwidth on the channel (from where file $u$ is fetched), etc., apart from the fetching and writing cost. 

\subsection{DARE-$\Delta$ end-to-end policy design}
\label{sec:complete_policy}
Next we outline the complete description of DARE-$\Delta$. DARE-$\Delta$ executes the following routine:

\par {\textbf{(I) Preprocessing}}: Given the set
of files ($M$), an acceptable delay ($\Delta$), obtain the optimal retention time 
parameters of all files (i.e. $\mu_i, i\in M$) by solving the convex optimization problem in (\ref{formulation:damage}). Further, sample retention times $\mathscr{R}_i,i\in M$ where each retention time, $\mathscr{R}_i$, is an exponential random variable sampled from mean $\mu_i$.

{\textbf{(II) Run-time execution}}: Given a request for file $r\in M$ at time $t\in \{1,2,\dots,T\}$, check if the cache $S$ contains the requested file. If so, serve $r$ instantaneously. If not, then find the file $k$ in cache such that $k=\arg\min_{u\in S+r} p_uc(u)$ (see Theorem \ref{main-theorem}). If file $k=r$, i.e. file $k$ is the request itself then do nothing. If file $k$ is a file from the cache then two cases arise: (a) if the cache is not full then write the requested file $r$ in cache for a retention duration $\mathscr{R}_r$; (b) if the cache is full then write the requested file $r$ with retention $\mathscr{R}_r$ and evict file $k$. 

\subsection{Numerical comparison against other online policies}
Recall that the caching policies LRU, FIFO and RND \emph{assume} that the files are written in the cache until evicted. For comparison, we embed the notion of retention time in the well-known policies by first assuming that the files are written in the cache for a deterministic time thus incurring a one-shot damage on each write. Second, we even optimize these policies by finding the \emph{best} such time for each policy. We do this by simulating the policies over a wide range of time values and finding a time that yields minimum damage if all files are written in cache for that time. 

We consider Poisson arrivals modulated with ZipF-$\alpha$ with $\alpha \in\{0.65, 0.95\}$, respectively. We consider unit delay and fix a value for expected cache misses, $\Delta=\epsilon$, in (\ref{formulation:damage}). Further, we simulate DARE-$\Delta$ by writing files in the cache for the optimal retention time computed from the solution of (\ref{formulation:damage}) with cost $c(m)=f(m)$ for the chosen value of $\epsilon$. The damage function for writing files is assumed to be quadratic in retention time. Upon a cache miss, DARE-$\Delta$ evicts the file $u$ with least $p_uc(u)$ (see Theorem $\ref{main-theorem}$). 
\par The results in Figure \ref{fig:D_1} show that \emph{even after optimizing} the existing caching policies over all possible retention times, DARE-$\Delta$ outperforms other policies by giving a \emph{2-4 fold} damage savings, thus agreeing with the analytical result (derived in Theorem \ref{main-theorem}). Moreover, we note that (1) FIFO and RND differ significantly with respect to damage under IRM, but are known to perform similar in terms of cache miss. (2) LRU and RND, which are being actively considered for deploying in CCN caches, perform very poor with respect to damage.

\section{Bridging the gap between offline and online policies}
\label{sec:policyperformance}
\begin{figure*}[tbp]
\centering
\includegraphics[scale=0.41]{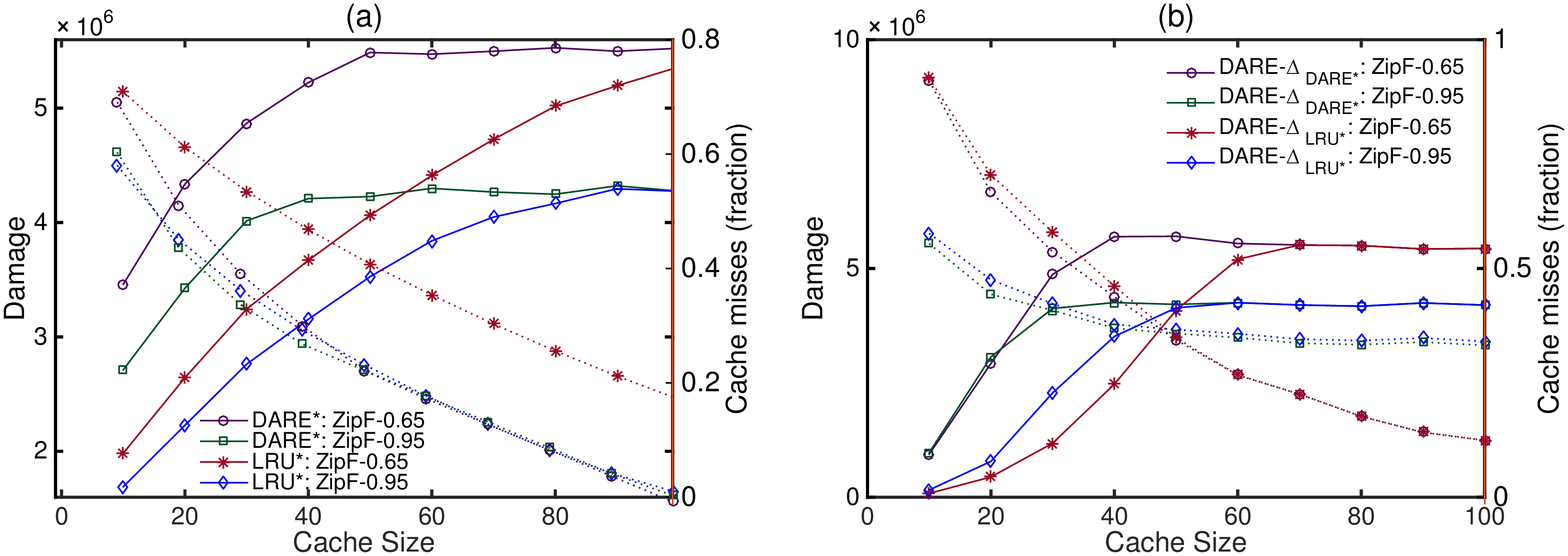}
\caption{Delay-damage trade-offs with increasing cache size and varying ZipF parameters for (a) Offline policies: DARE* and LRU*, (b) Online policies: DARE-$\Delta_{\text{DARE*}}$ and DARE-$\Delta_{\text{LRU*}}$. The bold lines show damage, the dotted lines show cache miss (fractions).}
\label{fig:D_5.1}
\end{figure*}

\par So far we described DARE-$\Delta$ and have shown some numerical results on its performance. In this section, we set up a simulation framework to test the large cache approximation. We benchmark the performance of DARE-$\Delta$ against LRU and FiF policies. Recall that the offline policies in Section \ref{sec:optoffpol} assume that all cache misses are allocated, i.e. a requested file \emph{must} be written to the cache in the event of a cache miss (see Section \ref{sec:optoffpol}). However, from a device damage perspective, allocating every cache miss is not necessarily optimal. For example, if there is a cache miss on a request for a very unpopular file then it can be served directly by fetching it from the server instead of writing it in the cache at the expense of evicting a more popular file. This practice of selecting when to cache a file and when not to is particularly useful in mitigating expensive write damage in a flash memory \cite{Pritchett}.

\par We thus obtain the modified variants of DARE, FiF and LRU, referred henceforth as DARE*, FiF* and LRU*, by allocating cache misses. That is, we now allow the policy to not cache a requested file \emph{if} it is going to be requested farthest in the future (for FiF* and DARE*) or is the least recently used (for LRU*). Note that, our online policy already allocates cache misses since it has the option to evict the request itself. 

\par The performance analysis is based on the following parameters. We are interested in time asymptotics so we first generate a long request string corresponding to a horizon of length $T=10^5$ slots (as in the offline case) or transitions (as in the online case). The generated file requests are sampled from $M=200$ files (of equal size). File requests form a Poisson process modulated with ZipF-ian popularity as before, i.e. the probability of requesting file $i$ is proportional to $1/i^\alpha$, with the sum of request probabilities normalized to one. We consider $\alpha=\{0.65,0.95\}$. We obtain the optimal retention times from Section \ref{sec:optonpol} and the optimal file to be evicted on each miss from Section \ref{sec:fincache}. The damage function for writing files is assumed to be quadratic in retention time. Cache size $B$ is varied from 10 files to 100 files in steps of 10. For each value of cache size, we obtain results and average it over 1000 iterations. Damage (or delay) for a particular cache size is calculated by obtaining the average damage (or fraction of cache misses) over all iterations.

\subsection{Damage-delay trade-offs}
\par We evaluate the damage-delay trade-off with increasing cache sizes for two settings. We first show the delay-damage trade-off for DARE* versus LRU* with increasing cache size in Figure \ref {fig:D_5.1}(a). The solid lines indicate the damage curve and the dotted lines indicate the delay curve. We observe that as the cache size increases, the damage increases and delay decreases, which is consistent with our observations in Section \ref{sec:optoffpol}. Moreover, we note that a higher value of $\alpha$ results in a lower damage. This is expected since popular files get a larger share with increasing $\alpha$ (i.e. the disparity between a popular versus a non-popular file increases) thus sufficing to store a few most popular files.

\begin{figure}
\centering
\includegraphics[scale=0.28]{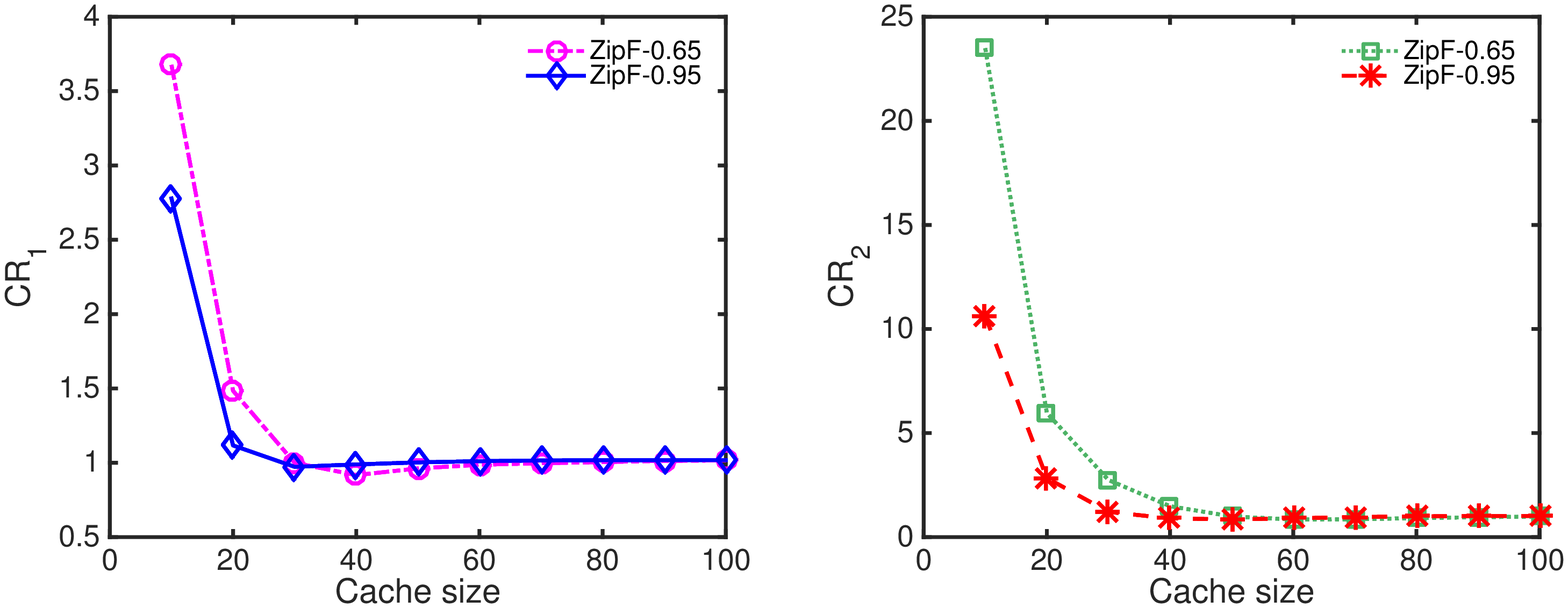}
\caption{Competitive ratios for the online policies with varying cache size and ZipF exponents.} 
\label{fig:D_5.2}
\end{figure}

We further use the delay (i.e. cache miss fraction, $\epsilon$) obtained from DARE* and LRU* for each cache size and provide it as an input to the online policy DARE-$\Delta$. The results are plotted as DARE-$\Delta_{\text{DARE*}}$ and DARE-$\Delta_{{LRU*}}$, respectively, in Figure \ref {fig:D_5.1} (b). Similar trends as above were observed in damage-delay trade-offs. Moreover, we note that the damage converges to a steady state with increasing cache size. Also, even though the retention times were obtained by feeding $\epsilon$ from DARE* and LRU* (see the dotted lines in Figure \ref {fig:D_5.1} (a)), the resultant delay obtained from variants of DARE-$\Delta$ was found to be moderately higher than the original $\epsilon$. This is the price of uncertainty paid when shifting from the offline caching, which has a complete knowledge of request arrivals, to the online caching, which only knows the value of the expected delay (cache miss fraction, $\epsilon$). 


\subsection{Competitive ratio}
\par A common tool to estimate the quality of an online algorithm (i.e. DARE-$\Delta)$ against an offline algorithm (i.e. DARE*) is to derive bounds on the ratio of the offline cost to the online cost given the worst case file requests; this ratio is called competitive ratio (CR). In our analysis, file requests conform to Poisson distribution thus limiting the possibility of a pathological worst-case input. We obtain CR for a long request string of length $T= 10^5$ (with other parameters same as above) to calculate two quantities.
(1) $CR_1=\frac{\text{damage from DARE*}}{\text{damage from DARE-}\Delta_{\text{DARE*}}}$, (2) $CR_2=\frac{\text{ damage from LRU*}}{\text{damage from DARE-}\Delta_{\text{LRU*}}}$. A value $CR=r$ shows an r-fold superiority of the online algorithm over the offline counterpart. The plot in Figure \ref{fig:D_5.2} shows the results for $CR_1$ and $CR_2$. 
\par We observe that the online cost is always lower than the offline cost resulting in both $CR$s' to take a value greater than one. Moreover, we observe that both the $CR$s' start with a higher value and gradually converge to one. This shows that our optimal online policy converges very fast to the damage performance of the optimal offline policy with increasing cache size. We briefly justify these observations as follows. Recall that the online policy obtains retention times for an infinte capacity cache whereas the offline policy uses a finite capacity cache. The only interaction between offline and online policies is via the delay (or the cache miss fraction, $\epsilon$) which we obtain from the offline policy and pass as a parameter to the online policy. Thus, for smaller caches, due to the cache capacity constraint, the offline policy incurs a higher cost compared to the online policy which assumes an uncapacitated cache while calculating file retentions. Nevertheless, as the cache size grows, the discrepancy between the two policies vanish and the cost incurred by both offline and online policies match up. 

\section{Conclusions}
\label{sec:conclusions}
\par This paper advances the state-of-the-art of traditional data caching literature when applied to CCN caches by proposing a cross-layer optimization for the network layer objective of minimizing the content retrieval delay and the device layer objective of minimizing the flash damage. 
We analyze the delay-damage trade-offs for both offline and online caching to obtain optimal damage-aware caching policies. Our results demonstrate that our policies achieve significant damage reductions when compared to the traditional caching policies with the same delay bounds.
This advocates using damage-aware caching policies in data intensive applications where flash memory cost and wear-out are of critical importance. 
\par One possible direction for future work is to consider temporal correlations in file request arrivals. Another direction is to extend the problem to a network of caches. Finally, it is an open problem to devise a framework that jointly optimizes over both retention time durations (i.e. all possible distributions) and eviction sequences.

\bibliographystyle{IEEEtran}
\bibliography{phd}

\section*{appendices}
\label{sec:appendixA}

\section*{Appendix A: Proof of Lemma \ref{lem:convex-objective}}
We want to show that the function, $h(\mathbf{q},\boldsymbol{\lambda},M):= \frac{1}{\sum_{m=1}^{M}{\lambda_m}}\sum_{m=1}^{M}{ \sum_{k=1}^{n}{a_k k! \frac{q_m^{k+1}}{\lambda_m^k(1-q_m)^k}}}$ is convex. Let $\sigma(k,m,M):=\frac{a_kk!}{\lambda_m^k \sum_{m=1}^M{\lambda_m}}$. Note that $\sigma(k,m,M)$ does not depend on $q_m$. We prove the claim by first showing it for $M=1$, where we have $h(q_1,\lambda_1,1)= { \sum_{k=1}^{n}{\sigma(k,1,1) \frac{q_1^{k+1}}{(1-q_1)^k}}}$. Let $H(q_1,\lambda_1,1)$ be the double derivative of $h(q_1,\lambda_1,1)$. For convexity, we require, 
$H(q_1,\lambda_1,1)= \frac {\partial^2{h}}{\partial{q_1^2}}\geq0.$
Now, the first derivative with respect to $q_1$ is:
\begin{align}
\frac{\partial h}{\partial q_1}={ \sum_{k=1}^{n}{\sigma(k,1,1) \frac{q_1^{k}(k+1-q_1)}{(1-q_1)^{k+1}}}} \nonumber
\end{align}
and, the second derivative, after simplifying, is:
\begin{align}
\frac{\partial^2 h}{\partial q_1^2}={ \sum_{k=1}^{n}{\sigma(k,1,1) \frac{q_1^{k-1}(1-q_1)^k k(k+1)}{(1-q_1)^{2(k+1)}}}} \nonumber
\end{align}
which is well defined and non-negative as every term in the expression is non-negative provided $q_1\in [0,1)$.\\
Thus the objective in (\ref{formulation:damage}) is convex because it is the sum of $M$ different convex functions, thus proving the claim.

\section*{APPENDIX B: Proof of Theorem \ref{thm:opt_off}}
The optimality theorem follows from Lemmas \ref{belady-opt} and \ref{arbit-ret}.
\begin{lemma} \label{belady-opt}
For the eviction sequence given by FiF policy, DARE policy gives optimal offline cost.
\end{lemma}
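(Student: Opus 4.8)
The plan is to fix the eviction sequence $E$ produced by FiF and show that, once the miss slots are pinned down, DARE assigns to every file write the \emph{minimum} retention compatible with ``cache-miss allocation,'' i.e.\ with the requirement that a file be readable from the cache from the slot it is written up to and including the slot of its last request before the eviction that closes that residency interval. First I would set up notation: for each eviction of a file $l$ at time $t$ in $E$, let $k$ be the latest slot $\le t$ at which $l$ was (re)written and $j$ the latest slot $<t$ at which $l$ was requested (with $k\le j$ by construction, and $k=0$ for files present initially). These intervals $[k,j]$, taken over all evictions together with the files that survive to the last eviction, partition every write that DARE performs, and for each such interval DARE uses retention $R_l = j-k+1$ slots.

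Next I would argue a lower bound: \emph{any} policy that produces exactly the eviction sequence $E$ (hence exactly the FiF miss slots) and that writes file $l$ at slot $k$ must keep $l$ in the cache at least through slot $j$, because $l$ is requested at slot $j$ and that request is \emph{not} one of the miss slots of $E$ (it precedes the closing eviction at $t$ and $l$ was not evicted in between). Therefore that write's retention must be at least $j-k+1$, so its one-shot cost is at least $f(j-k+1)$; here I use that $f$ is increasing, so $f(R)\ge f(j-k+1)$ whenever $R\ge j-k+1$. Summing this slot-by-slot (equivalently, write-by-write) over the partition gives $\sum f(R^{\text{any}}_l)\ge \sum f(j-k+1) = \text{DARE's retention cost}$. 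I would also check the converse feasibility direction — that the retentions DARE picks really do reproduce $E$ and incur no extra misses — which is essentially the content of the companion Lemma~\ref{belady-opt} being invoked, so I can lean on it: the point is that shrinking each residency to exactly $[k,j]$ removes no request of $l$ from the hit set and changes no other file's timeline, since between $j$ and the eviction at $t$ the slot was ``wasted'' retention.

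The main obstacle I expect is bookkeeping around \emph{overlapping or repeated residencies of the same file} and around files that are never evicted: I must be careful that the collection of intervals $[k,j]$ over all evictions (plus the tail files) is genuinely a partition of all writes performed in the FiF-induced execution, with no double counting and no write left out, and that a file rewritten several times contributes several independent terms whose lower bounds add. A secondary subtlety is the boundary convention $k=0$ for initially-cached files and the ``at least one slot'' floor ($R\ge 1$), which I would handle by noting $j-k+1\ge 1$ always, so DARE never violates the minimum-write-length constraint and the lower bound still applies. Once the partition is clean, the inequality $f(R)\ge f(j-k+1)$ for $R\ge j-k+1$ finishes it termwise, and convexity of $f$ is not even needed here — only monotonicity — which I would remark on. This yields that no policy realizing an optimal (FiF) eviction sequence beats DARE's retention cost, and combined with the already-noted fact that DARE attains the optimal miss count by definition, Theorem~\ref{thm:opt_off} follows.
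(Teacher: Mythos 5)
Your proposal is correct and is essentially the paper's argument in a different wrapper: the paper argues by contradiction that any policy realizing FiF's eviction/miss sequence but spending less retention on some write would have that file absent at its last request before the closing eviction, creating an extra miss --- exactly the contrapositive of your per-write lower bound $R \ge j-k+1$. Your direct, termwise formulation (with the observations that the $k=j$ case is covered by the one-slot floor and that only monotonicity of $f$ is needed) is a cleaner rendering of the same idea, not a genuinely different route.
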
 
\begin{proof}
DARE incurs the same delay cost as FiF (since both incur optimal cache misses). Thus to prove the lemma it suffices to prove the non-existence of a policy which assigns retention times to files as per the eviction sequence from FiF by incurring the same number of cache misses (delay cost) but a less retention cost than DARE. Let ${P^*}$ be that policy. This implies that there exists a file $l$ which is retained in cache for more slots with DARE compared to that with ${P^*}$. For $l$, there must exist a time triplet $(t_0,t_1,t_2)$ -- where $t_0<t_1<t_2$ -- such that for the interval $(t_0,t_1)$, $l$ is present in cache with both the policies; however, for the interval $(t_1,t_2)$, file $l$ is only present in cache with DARE but not with ${P^*}$, thus causing a less retention cost with $P^*$. DARE is designed to store $l$ in cache only for the time it is useful, indicating that $l$ is stored for the duration $(t_1,t_2)$ to account for a request for file $l$ at time $t_2$. However, at $t_2$, $P^*$ will result in a cache miss since it did not have $l$ cached, thus increasing the optimal number of cache misses by one. A contradiaction. This proves the claim.

\end{proof}

\par Let $\mathcal{F}$ denote the family  of eviction sequences with optimal ($F$) misses. While DARE is built on eviction sequence from FiF (denoted $E$), we now prove that DARE is optimal over all $J \in \mathcal{F}$.

\begin{lemma}\label{arbit-ret}
The retention cost incurred with $J\in \mathcal{F}, J\ne E$ is greater or equal to the retention cost incurred by $E$.
\end{lemma}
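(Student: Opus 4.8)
The plan is first to fix what ``retention cost incurred with $J$'' means: given an eviction sequence $J$, apply the DARE backward rule to it, i.e.\ each write of a file at slot $k$ that is evicted at slot $t$ is kept for $j-k+1$ slots, where $j$ is the last request for that file in $(k,t]$. By the reasoning of Lemma~\ref{belady-opt} this is the cheapest retention schedule consistent with the evictions of $J$, so ``the retention cost incurred with $J$'' is well defined and the statement $\mathrm{RetCost}(E)\le\mathrm{RetCost}(J)$ is exactly what, together with Lemma~\ref{belady-opt}, makes DARE globally optimal. Note that every $J\in\mathcal F$ realizes the same number $F$ of misses, hence the same number of writes ($F$ plus the number of files initially resident), so $\mathrm{RetCost}(J)$ and $\mathrm{RetCost}(E)$ differ only through the individual retention \emph{durations}, never through the number of terms summed.

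I would prove $\mathrm{RetCost}(E)\le\mathrm{RetCost}(J)$ by an exchange argument that morphs $J$ into $E$, in the spirit of Belady's optimality proof but also tracking retention cost. Let $J\ne E$ and let $t$ be the earliest slot at which $J$ and $E$ evict differently; up to $t$ they keep identical cache contents $C$, and at $t$ there is a miss on some $x\notin C$. Say $E$ evicts $e\in C$ and $J$ evicts $g\in C$ with $g\ne e$; by the Farthest-in-Future rule the next request for $e$ occurs no earlier than the next request for $g$. Define $J'$: it agrees with $J$ before $t$, evicts $e$ (not $g$) at $t$, and afterwards copies $J$ until the first slot $t'>t$ at which either $g$ is requested or $J$ evicts $e$; a routine case analysis (as in Belady's proof) lets one choose $J'$'s moves at and after $t'$ so that the caches of $J$ and $J'$ become identical again and $J'$ incurs no more misses than $J$ on $[t,t']$. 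Hence $J'\in\mathcal F$ and $J'$ agrees with $E$ on a strictly longer prefix than $J$ does.

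The new ingredient is to verify the swap does not raise the retention cost. Between $t$ and $t'$ the only files whose residency epochs or request pattern differ under $J$ and $J'$ are $e$ and $g$ (under $J$, $e$ stays resident and $g$ is absent; under $J'$, the reverse), and every other file has identical epochs, hence identical contributions. So it suffices to compare the sum of $f$ over the at most two affected epochs of $e$ and $g$. Here the Farthest-in-Future ordering does the work: because the next request for $g$ comes no later than that for $e$, the ``useful'' portion of $e$'s $J$-epoch on $[t,t']$ (the part running up to a request) is at least as long as the corresponding useful portion of $g$'s $J'$-epoch, while the epochs that reopen at or after $t'$ (the next epoch of $g$ under $J$, and of $e$ under $J'$) are shifted by the same amount on the two sides; thus the multiset of affected durations under $J'$ is dominated by --- in fact majorized by --- that under $J$. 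Since $f$ is convex, increasing and $f(0)=0$, this gives $\sum f$ over the affected epochs no larger under $J'$, so $\mathrm{RetCost}(J')\le\mathrm{RetCost}(J)$. Iterating, the agreement prefix with $E$ grows each time, so after finitely many steps $J$ becomes $E$ with the retention cost never having increased, yielding $\mathrm{RetCost}(E)\le\mathrm{RetCost}(J)$.

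The step I expect to be the main obstacle is exactly this retention-cost bookkeeping: one must identify precisely which residency epochs of $e$ and $g$ the swap alters (the epoch containing $t$, and the epoch reopening at $t'$), handle the boundary cases in which $e$ or $g$ is never requested again before the horizon ends or is never evicted, and keep control of the discrepancy when Belady's case analysis forces it to propagate past $t'$ (so that the pair of swapped files changes). It is tempting but incorrect to argue termwise on durations alone, since under $J$ the retained file $e$ can acquire a long useful tail; it is the Farthest-in-Future property that guarantees the tail $J'$ loses on $e$ is at least compensated by the tail it gains on $g$, and convexity of $f$ is what converts this length comparison into the claimed cost comparison.
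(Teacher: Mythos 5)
Your plan founders at exactly the step you flag as the main obstacle, and the failure is not repairable bookkeeping: the per-swap claim ``$\mathrm{RetCost}(J')\le\mathrm{RetCost}(J)$'' is false. When you realign $J$ with FiF at time $t$, the file $g$ with the \emph{nearer} next request is the one $J'$ keeps resident, so under the DARE backward rule its current epoch must now stretch to cover the upcoming request at $\tau_g$ (and possibly beyond), replacing the two shorter epochs that $J$ pays for; since $f$ is convex, increasing with $f(0)=0$ it is superadditive, so this merge is \emph{more} expensive than the split, while the saving on $e$'s truncated epoch can be zero (e.g.\ if $e$ was never requested before $t$). Concretely: cache size $B=2$, initial cache $\{a,b\}$ (written at $t=0$), request string $c,b,a,b$ at slots $1,\dots,4$, quadratic $f$. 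The optimal miss count is $F=2$. FiF gives $E$: evict $a$ at slot $1$, evict $c$ at slot $3$; DARE retentions (last request minus write time) are $b:4$, $a:0$, $c:0$, $a:0$, cost $16$. The sequence $J$: evict $b$ at slot $1$, evict $c$ at slot $2$ also has $2$ misses, with retentions $a:3$, $b:0$, $c:0$, $b:2$, cost $13<16$. Performing your exchange on $J$ at slot $1$ (evict $e=a$ instead of $g=b$, then patch) produces exactly $E$ and strictly \emph{increases} the cost from $13$ to $16$, so the claimed domination/majorization of the affected durations does not hold, and the induction never gets off the ground.

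Note also that this is not a defect you could have avoided by better tracking of epochs: under your (entirely natural) reading of ``retention cost incurred with $J$'' --- apply the DARE backward rule to $J$'s evictions --- the example above shows the lemma itself fails, and it likewise contradicts the paper's own sketch, which asserts that the exchange makes the one-shot retention costs of $E$ a permutation of those of $J$ (here the multisets $\{4,0,0,0\}$ and $\{3,2,0,0\}$ are not permutations of each other, nor are the totals equal). The statement can only survive under some asymmetric accounting, e.g.\ charging a competitor $J$ the full stay-until-eviction durations while $E$ enjoys the DARE truncation (cf.\ the paper's footnote on how FiF is charged); if that is the intended meaning, your proof should start from that definition rather than from the symmetric one, and the exchange step would have to compare truncated costs of $E$ against untruncated costs of $J$, which is a different (and much weaker) inequality than the one you set out to prove.
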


\begin{proof}
We provide a brief proof sketch due to space constraints. Suppose there exists an optimal sequence $J^*\in \mathcal{F}$, $J^*\neq E$, such that the total retention cost with $J^*$ is less than that with $E$. We transform every eviction in $J^*$ to the evictions in $E$ using the \emph{exchange argument} and show that the one-shot retention cost of a file in $E$ is a permutation of the one-shot retention cost of a file in $J^*$, thus making the cumulative retention costs equal which is a contradiction.
\end{proof}
This completes the proof of Theorem 1.

\section*{APPENDIX C: Proof of Theorem \ref{main-theorem}}
\label{app:F}
There are two minimization terms in Bellman equations (\ref{dpe})-(\ref{dpe2}). We use induction for minimizing the first term in (\ref{dpe}), i.e., $\min_{u\in S+r}\mathbb{E}_{R^*}[V_T(S+r-u,R^*,0)]$. We then show that for the second term, i.e. $\min_{u\in S+r}\mathbb{E}_{D^*}[V_T(S+r-u,0,D^*)]$, the proof proceeds by induction similar to minimization of the first term and it reduces to minimizing the first term itself due to (\ref{dpe}), (\ref{dpe2}) after simplification.\\
\textbf{Proving the result for the first minimization term:}\\
We define $v=\arg \min\{u\in S+r: p_uc(u)\}$, then, $\mathbb{E}[V_T(S+r-v,R^*,0)]=\min_{u\in S+r} \mathbb{E}[V_T(S+r-u,R^*,0)].$
The proof proceeds by induction on $T=0,1,\dots$. At each step, we want to show that \\
$\mathbb{E}[V_{T}(S+r-u,R^*,0)]-\mathbb{E}[V_{T}(S+r-v,R^*,0)]\geq 0 $\\
\textbf{The basis step}: Fix the initial state, request as $(S, r)$. Thus,\\
$V_0(S,r,0)=1_{\{r\notin S, |S|<B\}}2c(r)+1_{\{r\notin S,|S|=B\}}2 c(r)$, and,
\small
\begin{align*}
\mathbb{E}[V_0(S+r-u,R^*,0)]=& \mathbb{E}[(1_{R^*\notin\{S+r-u\},|S+r-u|<B}+\\& 1_{R^*\notin\{S+r-u\},|S+r-u|=B})2c(R^*)]\\
=& \mathbb{E}[1_{R^*\notin\{S+r-u\}}2c(R^*)]\\
=& \mathbb{E}[1_{R^*\notin\{S+r\}}2c(R^*)] +\mathbb{E}[1_{R^*=u}2c(R^*)]
\end{align*}
\normalsize
Now we write an expression for $\mathbb{E}[V_0(S+r-v,R^*,0)]$, and observe that,
\begin{align*}
&\mathbb{E}[V_0(S+r-u,R^*,0)]-\mathbb{E}[V_0(S+r-v,R^*,0)]\\
& =\mathbb{E}[1_{R^*\notin\{S+r\}}2c(R^*)] + \mathbb{E}[1_{R^*=u}2c(R^*)]\\& - \mathbb{E}[1_{R^*\notin\{S+r\}}2c(R^*)] - \mathbb{E}[1_{R^*=v}2c(R^*)]\\
& = \mathbb{E}[1_{R^*=u}2c(R^*)]- \mathbb{E}[1_{R^*=v}2c(R^*)]\\
& = 2(p_uc(u)-p_vc(v))
\end{align*}
That is the claim is true with $T=0$.\\
\textbf{The Induction step}:
Assume that the claim is true for some fixed $T>0$.
Fix $(S,r,0)$ and $(S,0,d)$ with $r\notin S$ and $d\in S$. We need to show that, for $u\in S+r$, we have, \\
$\mathbb{E}[V_{T+1}(S+r-u,R^*,0)]-\mathbb{E}[V_{T+1}(S+r-v,R^*,0)]\geq 0$\\
To show this, we take the expectation of $V_{T+1}(S+r-u,R^*,0)$ and use (\ref{dpe}) to get,
\small
\begin{align}
&\mathbb{E}[V_{T+1}(S+r-u,R^*,0)] \nonumber \\
&= P[R^*\in S+r-u] E[V_T(S+r-u,R^{**},0))] \label{40} \\
&+ P[R^*\in S+r-u] E[V_T(S+r-u,0,D^{**}))] \label{41} \\
&+ \mathbb{E}[1_{R^*\notin\{S+r-u\},|S+r-u|<B}*2c(R^*)] \label{42}\\
&+ \mathbb{E}[1_{R^*\notin\{S+r-u\},|S+r-u|=B}*2c(R^*)] \label{43} \\
&+ \mathbb{E}[1_{R^*\notin\{S+r-u\},|S+r-u|<B}\mathbb{E}(V_T(S+r-u+R^*, R^{**},0))] \label{44} \\
&+ \mathbb{E}[1_{R^*\notin\{S+r-u\},|S+r-u|<B}\mathbb{E}(V_T(S+r-u+R^*,0, D^{**}))] \label{45}\\
&+ \mathbb{E}[1_{R^*\notin\{S+r-u\},|S+r-u|=B} \widehat{V}_T{(S+r-u,R^*,0)}] \label{46}\\
&+ \mathbb{E}[1_{R^*\notin\{S+r-u\},|S+r-u|=B} \widehat{V}_T{(S+r-u,0, R^*)}] \label{47}
\end{align}
\normalsize
where $\widehat{V}_T{(S+r-u,R^*,0)}:=\min_{u'\in S+r-u+R^*}\mathbb{E}[V_T(S+r-u+R^*-u',R^{**},0)]$ and\\ $\widehat{V}_T{(S+r-u,0,R^*)}:=\min_{u'\in S+r-u+R^*}\mathbb{E}[V_T(S+r-u+R^*-u',0,D^{**})]$.
Now, we state some \textit{reductions} (\ref{eq:key1}) to (\ref{eq:key4}) which will be instrumental in getting to the proof.
\begin{align}
& P[R^*\in S+r-u]\mathbb{E}[{V_T(S+r-u,R^{**},0)}]\nonumber \\
& = P[R^*\in S+r-(u,v)]\mathbb{E}[{V_T(S+r-u,R^{**},0)}] \nonumber \\
& +p_v\mathbb{E}[{V_T(S+r-u,R^{**},0)}]\label{eq:key1}\\
& \mathbb{E}[1_{R^*\notin \{S+r-u\}}c(R^*)]
=\mathbb{E}[1_{R^*\notin \{S+r\}}c(R^*)]+ p_uc(u) \label{eq:key2}
\end{align}
\begin{align}
& \mathbb{E}[1_{\{R^*\notin S+r-u\}}\widehat{V}_T(S+r-u, R^*,0)] \nonumber \\
& = \mathbb{E}[1_{R^*\notin \{S+r\}}\widehat{V}_T{(S+r-u,R^*,0)}]\nonumber\\
 & + p_u\widehat{V}_T(S+r-u,u,0) \label{eq:key3}\\
&\widehat{V}_T(S+r-u,u,0)=\min_{u'\in S+r}\mathbb{E}[V_T(S+r-u',R^{**},0)] \nonumber\\
&=\mathbb{E}[V_T(S+r-v,R^{**},0)] \label{eq:key4}
\end{align}
With the machinery to simplify the expressions, we write the difference in terms with $u$ and $v$ for (\ref{40}) through (\ref{47}). Recall reduction (\ref{eq:key1}) to express $\ref{40}(u)- \ref{40}(v)$ as:
\begin{align*}
& P[R^*\in S+r-(u,v)]\mathbb{E}[{V_T(S+r-u,R^{**},0)}]\\&+p_v\mathbb{E}[{V_T(S+r-u,R^{**},0)}]\\& -P[R^*\in S+r-(u,v)]\mathbb{E}[{V_T(S+r-v,R^{**},0)}]\\& -p_u\mathbb{E}[{V_T(S+r-v,R^{**},0)}]\\&
=P[R^*\in S+r-(u,v)]\\ & \times(\mathbb{E}[{V_T(S+r-u,R^{**},0)}]- \mathbb{E}[{V_T(S+r-v,R^{**},0)}])\\& +p_v\mathbb{E}[{V_T(S+r-u,R^{**},0)}]\\& - p_u\mathbb{E}[{V_T(S+r-v,R^{**},0)}]
\end{align*}
\par We know by induction on $T$ that, $\mathbb{E}[{V_T(S+r-u,R^{**},0)}]-\mathbb{E}[{V_T(S+r-v,R^{**},0)}]\geq 0$ holds. Thus, to prove $\ref{40}(u)-\ref{40}(v)\geq 0$, we need,\\
\begin{small}
$p_v\mathbb{E}[{V_T(S+r-u,R^{**},0)}]-p_u\mathbb{E}[{V_T(S+r-v,R^{**},0)}]\geq 0$
\end{small}
\normalsize
\par Next, we invoke reduction (\ref{eq:key2}), define $p_S(B)=P(|S|<B)$ and simplify $\ref{42}(u)-\ref{42}(v)$ as follows:
\begin{align*}
& \mathbb{E}[1_{R^*\notin\{S+r-u\},|S+r-u|<B}2c(R^*)]\\&-\mathbb{E}[1_{R^*\notin\{S+r-v\},|S+r-v|<B}2c(R^*)]\\
= & \sum_{R^*\notin\{S+r-u\}}P[R^*||S|<B]P[|S|<B] 2c(R^*)\\
&-\sum_{R^*\notin\{S+r-v\}}P[R^*||S|<B]P[|S|<B] 2c(R^*)\\
= & \sum_{R^*\notin\{S+r-u\}} 2p_S(B)\times p_{R^*}c(R^*)\\& -\sum_{R^*\notin\{S+r-v\}} p_{R^*}p_S(B)2c(R^*)\\
=& \sum_{R^*\notin\{S+r\}} 2p_S(B)\times p_{R^*} c(R^*) + 2p_S(B)\times p_uc(u)\\
& - \sum_{R^*\notin\{S+r\}} p_{R^*}p_S(B)2c(R^*)-2p_S(B)\times p_vc(v)\\
=& 2p_S(B)(p_uc(u)-p_vc(v))
\end{align*}
which is $\geq 0$ because $p_uc(u)\geq p_vc(v)$ by definition of $v$.
Similarly, we show $\ref{43}(u)-\ref{43}(v)\geq 0$.\\
Now, we invoke reduction (\ref{eq:key3}) and consider $\ref{44}(u)-\ref{44}(v)$.
\small
\begin{align*}
&\mathbb{E}[1_{R^*\notin\{S+r-u\},|S+r-u|<B}\mathbb{E}(V_T(S+r-u+R^*, R^{**},0))]\\
& -\mathbb{E}[1_{R^*\notin\{S+r-v\},|S+r-v|<B}\mathbb{E}[V_T(S+r-v+R^*, R^{**},0))]\\
=& \sum_{R^*\notin\{S+r-u\}}P[R^*||S|<B]\times P[|S|<B]\\
& \times \mathbb{E}[V_T(S+r-u+R^*, R^{**},0)]\\
&-\sum_{R^*\notin\{S+r-v\}}P[R^*||S|<B]\times P[|S|<B]\\ & \times \mathbb{E}[V_T(S+r-v+R^*, R^{**},0)]\\
=& \sum_{R^*\notin\{S+r-u\}} p_{R^*}\times p_S(B)\times \mathbb{E}[V_T(S+r-u+R^*, R^{**},0)]\\
& -\sum_{R^*\notin\{S+r-v\}} p_{R^*}\times p_S(B)\times \mathbb{E}[V_T(S+r-v+R^*, R^{**},0)]\\
=& \sum_{R^*\notin\{S+r\}} p_{R^*}\times p_S(B)\times (\mathbb{E}[V_T(S+r-u+R^*, R^{**},0)]\\& -\mathbb{E}[V_T(S+r-v+R^*, R^{**},0)] \\
&+ (p_u-p_v)*p_S(B)\times \mathbb{E}[V_T(S+r, R^{**},0)]
\end{align*}
\normalsize
which is $\geq 0$ by induction, since\\
\begin{footnotesize}
$\mathbb{E}(V_T(S+r-u+R^*, R^{**},0))]\geq \mathbb{E}(V_T(S+r-v+R^*, R^{**},0))]$
\end{footnotesize}
\normalsize
and $p_u\geq p_v$ by definition of $v$. 
Similarly, we show that $\ref{45}(u)-\ref{45}(v)\geq 0$.
Finally, $\ref{46}(u)-\ref{47}(v)$ becomes:
\footnotesize
\begin{align*}
& \mathbb{E}[1_{R^*\notin\{S+r-u\},|S+r-u|=B} \widehat{V}_T{(S+r-u,R^*,0)}]\\& - \mathbb{E}[1_{R^*\notin\{S+r-v\},|S+r-v|=B} \widehat{V}_T{(S+r-v,R^*,0)}]\\
=& \sum_{R^*\notin\{S+r-u\}}P[R^*||S|=B]\times P[|S|=B]\widehat{V}_T{(S+r-u,R^*,0)}]\\
&-\sum_{R^*\notin\{S+r-v\}}P[R^*||S|=B] P[|S|=B]\widehat{V}_T{(S+r-v,R^*,0)}]\\
=& \sum_{R^*\notin\{S+r\}} (1-p_S(B))p_{R^*}\\
& (\widehat{V}_T{(S+r-u,R^*,0)}-\widehat{V}_T{(S+r-v,R^*,0)}) \\
&+ (1-p_S(B))(p_u\widehat{V}_T{(S+r-u,u,0)} -p_v\widehat{V}_T{(S+r-v,v,0)})
\end{align*} 
\normalsize
The first term in the above expression is always $\geq 0$ by induction. Now, we only need to show that the following sum 
is non-negative by using the definition of $\widehat{V}_T(.)$:
\small
\begin{align*}
& p_v\mathbb{E}[{V_T(S+r-u,R^{**},0)}]-p_u\mathbb{E}[{V_T(S+r-v,R^{**},0)}]\\+&(1-p_S(B))[p_u\mathbb{E}(V_T(S+r-v,R^{**},0))\\ & -p_v\mathbb{E}[V_T(S+r-v,R^{**},0)]]\\
=& p_v\mathbb{E}[{V_T(S+r-u,R^{**},0)}]\\& -[(1-p_S(B))(p_v-p_u)-p_u]\mathbb{E}[{V_T(S+r-v,R^{**},0)}]
\end{align*}
\normalsize
Now $\mathrm{\mathbb{E}[{V_T(S+r-v,R^{**},0)}]\leq \mathbb{E}[{V_T(S+r-u,R^{**},0)}]}$,by definition of $v$ therefore the above expression is non-negative if: 
$p_v\geq (1-p_S(B))(p_v-p_u)-p_u$
which is equivalent to showing $2p_u\geq p_S(B)(p_u-p_v)$. This holds since $2p_u=p_u+p_u\geq p_u-p_v \geq p_S(B)(p_u-p_v)$. This completes minimizing the first term.


\textbf{Proving the result for the second minimization term:}\\
We now prove the result for the expression $\min_{u\in S+r}\mathbb{E}_{D^*}[V_T(S+r-u,0,D^*)]$. 
Similar to the case above, let
$v=\arg \min\{u\in S+r: p_uc(u)\}$, then, $\mathbb{E}[V_T(S+r-v,0,D^*)]=\min_{u\in S+r} \mathbb{E}[V_T(S+r-u,0,D^*)].$
The proof proceeds by induction on $T=1,2\dots$. At each step, we want to show that 
$\mathbb{E}[V_{T}(S+r-u,0,D^*)]-\mathbb{E}[V_{T}(S+r-v,0,D^*)]\geq 0.$\\
\textbf{The basis step}: Fix the initial state as $S$ and  departing file as $d$. Note that,
\begin{align*}
V_1(S,0,d)=&\mathbb{E}(V_0(S-d, R^*, 0)+\mathbb{E}(V_0(S-d,0,D^*))\\
=& 1_{\{R^*\notin {S-d}, |S-d|<B\}}2c(R^*)+\\
& 1_{\{R^*\notin {S-d},|S-d|=B\}}2 c(R^*)
\end{align*} 
This is because the second term in the above equation, i.e., $\mathbb{E}(V_0(S-d,0,D^*))=0$ by (\ref{dpe2}). Therefore, 
\begin{align*}
&\mathbb{E}[V_1(S+r-u,0,D^*)]\\=& \mathbb{E}[(1_{R^*\notin\{S+r-u-D^*\},|S+r-u-D^*|<B}+\\& 1_{R^*\notin\{S+r-u-D^*\},|S+r-u-D^*|=B})*2c(R^*)]\\
=& \mathbb{E}[1_{R^*\notin\{S+r-u-D^*\}}*2c(R^*)]\\
=& \mathbb{E}[1_{R^*\notin\{S+r\}}*2c(R^*)] +\\& \mathbb{E}[1_{R^*=u}*2c(R^*)]+\mathbb{E}[1_{R^*=D^*}*2c(R^*)]
\end{align*}
Now we write a similar expression for $\mathbb{E}[V_0(S+r-v,0,D^*)]$, and observe that,
\begin{align*}
&\mathbb{E}[V_1(S+r-u,0,D^*)]-\mathbb{E}[V_1(S+r-v,0,D^*)]\\
& =\mathbb{E}[1_{R^*\notin\{S+r\}}*2c(R^*)] + \mathbb{E}[1_{R^*=u}*2c(R^*)]\\
& +\mathbb{E}[1_{R^*=D^*}*2c(R^*)]- \mathbb{E}[1_{R^*\notin\{S+r\}}*2c(R^*)]\\& - \mathbb{E}[1_{R^*=v}*2c(R^*)]-\mathbb{E}[1_{R^*=D^*}*2c(R^*)]\\
& = \mathbb{E}[1_{R^*=u}*2c(R^*)]- \mathbb{E}[1_{R^*=v}*2c(R^*)]\\
& = 2(p_uc(u)-p_vc(v))
\end{align*}
That is, the claim is true with $T=1$.\\
\textbf{The induction step}:
Assume that the claim is true for some fixed $T>1$.
Fix $(S,r,0)$ and $(S,0,d)$ with $r\notin S$ and $d\in S$. We need to show that, for $u\in S+r$, we have, 
\begin{align}
\mathbb{E}[V_{T+1}(S+r-u,0,D^*)]-\mathbb{E}[V_{T+1}(S+r-v,0,D^*)]\geq 0 \nonumber
\end{align}
Now, we take the expectation of $V_{T+1}(S+r-u,0,D^*)$ and use Equation \ref{dpe2} to get,
\begin{align}
&\mathbb{E}_{D^*}[{V_{T+1}(S+r-u,0,D^*)}] \nonumber \\&=\mathbb{E}_{D^*}[\mathbb{E}_{R^*}(V_T(S+r-u-D^*,R^*,0))] \nonumber\\
&\; +\mathbb{E}_{D^*}[\mathbb{E}_{D^{**}}(V_T(S+r-u-D^*,0,D^{**}))] 
\label{similarity}
\end{align}
Note that the argument inside $\mathbb{E}_{D^*}[.]$ of the first term in equation \ref{similarity} is the same as solving for the induction step for the minimization of $\min_{u\in S+r}\mathbb{E}_{R^*}[V_T(S+r-u,R^*,0)]$. Also, the second term with double expectations on $D^*$ and $D^{**}$ would recur to solving $\min_{u\in S+r}\mathbb{E}_{D^*}[V_T(S+r-u,0,D^*)]$ again by the virtue of the recurrence equations \ref{dpe} and \ref{dpe2}. And we have already shown that the induction step holds for $\min_{u\in S+r}\mathbb{E}_{D^*}[V_T(S+r-u,0,D^*)]$. Hence, the claim.
This completes the proof of Theorem \ref{main-theorem}.

\end{document}